\newtheorem{definition}{Definition}
\newtheorem{theorem}{Theorem}
\newtheorem{cor}{Corollary}
\newcommand{\Ra}{\ensuremath{\stackrel{\$}{\leftarrow}{\xspace}}}
\newcommand{\A}{\ensuremath{\mathcal{A}}}
\newcommand{\ZZ}{\ensuremath{\mathbb{Z}}}
\newcommand{\Fq}{\ensuremath{\mathbb{F}_q}}
\newcommand{\EC}{\ensuremath{E(\mathbb{F}_q)}}
\newcommand{\B}{\ensuremath{\mathcal{B}}}
\newcommand{\C}{\ensuremath{\mathcal{C}}}
 \pgfplotsset{compat=1.3}
\newcommand{\sk}{\ensuremath{\mathsf{sk}}}
   \newcommand{\sgnextract}{{\mathtt{Extract}}}
  \newcommand{\userkg}{{\mathtt{UserKG}}}
  \newcommand{\cmpkey}{{\mathtt{CmpKey}}}
 \newcommand{\IBS}{\ensuremath{\mathtt{IBS}}\xspace}
 \newcommand{\sign}{\ensuremath{\mathtt{Sign}}\xspace}
 \newcommand{\verify}{\ensuremath{\mathtt{Verify}}\xspace}
 \newcommand{\keygen}{\ensuremath{\mathtt{KeyGen}}\xspace}
 \newcommand{\setup}{\ensuremath{\mathtt{Setup}}\xspace}
 \newcommand{\PRF}{\ensuremath{\mathtt{PRF}}\xspace}
 \newcommand{\h}{\ensuremath{\mathtt{H}}\xspace}
\newcommand{\circled}[2][]{	\tikz[baseline=(char.base)]{		\node[shape = circle, draw, fill=red, color=red, inner sep = .2pt]
		(char) {\phantom{\ifblank{#1}{#2}{#1}}};		\node at (char.center) {\makebox[0pt][c]{\color{white}{#2}}};}}
\newcommand{\scheme}{\texttt{E2IBS}}
\begin{document}
 
\title{Securing 5G Bootstrapping: A Two-Layer\\ IBS Authentication Protocol}
 
\author{Yilu~Dong,
        Rouzbeh~Behnia,
        Attila~A.~Yavuz,
        and~Syed~Rafiul~Hussain,~\IEEEmembership{Member,~IEEE}}

\maketitle

\begin{abstract}
The lack of authentication during the initial bootstrapping phase between cellular devices and base stations allows attackers to deploy fake base stations and send malicious messages to the devices. These attacks have been a long-existing problem in cellular networks, enabling adversaries to launch denial-of-service (DoS), information leakage, and location-tracking attacks. While some defense mechanisms are introduced in 5G, (e.g., encrypting user identifiers to mitigate IMSI catchers), the initial communication between devices and base stations remains unauthenticated, leaving a critical security gap. To address this, we propose \scheme{}, a novel and efficient two-layer identity-based signature scheme designed for seamless integration with existing cellular protocols. We implement \scheme{} on an open-source 5G stack and conduct a comprehensive performance evaluation against alternative solutions. Compared to the state-of-the-art \texttt{Schnorr-HIBS}, \scheme{} reduces attack surfaces, enables fine-grained lawful interception, and achieves 2x speed in verification, making it a practical solution for securing 5G base station authentication.
\end{abstract}

\begin{IEEEkeywords}
Cellular Networks, Fake Base Station Attacks, Identity-Based Signatures, Authentication Protocol.  
\end{IEEEkeywords}

\IEEEpeerreviewmaketitle

\hfill  
 
\hfill January 30, 2025

\section{Introduction}
\IEEEPARstart{5}{G} cellular networks are revolutionizing connectivity, offering faster speeds, greater bandwidth, and enhanced security compared to previous generations. These advancements are driven by innovative physical layer communication technologies and new security policies. Despite these leaps, 5G retains certain mechanisms from earlier generations to ensure seamless backward compatibility.
One such mechanism is the cell selection procedure used during the initial bootstrapping phase. In this phase, a user device selects a suitable base station that enables it to establish a connection with the core network and subsequently with the Internet.
Base stations periodically broadcast information about the network in \texttt{System Information} messages to announce their presence. Cellular devices scan these broadcast messages and connect to an appropriate base station that meets the cell (re-)selection criteria, which include the received signal strength of the broadcast messages, the base station's acceptability to the device, and the type of services offered by that base station.

Despite being critical to establish the root of trust for the communications between devices and networks, \texttt{System Information} broadcast messages in 5G networks are not authenticated. This allows an adversary to spoof~\cite{dabrowski2014imsi,hussain2019insecure} or tamper with~\cite{yang2019hiding} \texttt{System Information} messages by a fake base station emitting signals with a higher strength than that of the nearby legitimate base stations. After luring the cellular devices to connect to it, the fake base station can then launch security and privacy attacks, including DNS-redirection~\cite{rupprecht2019breaking}, denial-of-service (DoS)~\cite{shaik2015practical, hussain2018lteinspector, hussain20195greasoner, 3GPP:33.809, shaik2018impact, rashid2024state}, downgrade~\cite{shaik2015practical, hussain2018lteinspector, hussain20195greasoner, 3GPP:33.809, shaik2019new} attacks, battery depletion attacks~\cite{shaik2015practical, hussain2018lteinspector, hussain20195greasoner, 3GPP:33.809}, information leak attacks~\cite{hussain2018lteinspector, 5gbasechecker, chlosta20215g, park2022doltest, kim2019touching}, location tracking attacks~\cite{5gbasechecker, chlosta20215g, park2022doltest}, and fingerprinting attacks~\cite{5gbasechecker, park2022doltest, rashid2024state}. Although the 5G specifications \cite{3GPP:21.915} have introduced a new cryptographic scheme for preventing the exposure of cellular device's permanent identifier in plaintext, it does not address the root cause of the fake base station problem, which is the absence of authenticating the \texttt{System Information} broadcast messages. A broadcast authentication scheme is essential for a cellular device to verify the legitimacy of the base station it initially connects to. However, such schemes are currently lacking due to deployment challenges and concerns about backward compatibility. This paper aims to address this gap by proposing a practical authentication mechanism that secures the initial connection bootstrapping process between cellular devices and base stations.

Although symmetric-key primitives (e.g., HMAC) can provide efficient authentication, aside from their inherent key distribution and storage hurdles, they fail to provide public verifiability and non-repudiation. 
A recent study by 3GPP~\cite{3GPP:33.809} and other efforts~\cite{lee2009extended, yi1998optimized, zheng1996authentication, gao2021evaluating} have explored using certificate-based Public Key Infrastructure (PKI) or identity-based signature schemes~\cite{boneh2001short, cheng2017sm9, IEEE1363} to authenticate
base stations. However, these techniques are expensive in terms of communication and computation overheads at both the signer and verifier sides. 
Hussain et al.'s scheme~\cite{hussain2019insecure} demonstrates the viability of PKI-based authentication with optimizations like shorter certificates and an offline-online signature generation mechanism. However, it requires costly cryptographic operations and causes long delays in signature verification at the device. These issues are compounded in 5G networks, where higher frequency radio waves and frequent base station handovers introduce significant communication and computational overheads, making it difficult for existing schemes to be adopted in 5G specifications or deployed by service providers. 

In addition to meeting security and performance requirements, the solution must support lawful interception. Lawful interception enables law enforcement agencies to monitor communications for crime investigation and national security purposes while adhering to strict legal frameworks to balance privacy and accountability. For this, law enforcement agencies must authenticate their base stations by obtaining a key from a legitimate PKG. To uphold legal compliance and transparency, law enforcement should deploy temporary base stations only at authorized locations and times, with authentication keys that expire after use. Without a robust revocation scheme, expired keys could be exploited for unauthorized eavesdropping. This risk is particularly severe in identity-based cryptosystems, where user keys are derived from publicly available information.
Existing base station authentication schemes either neglect this scenario~\cite{hussain2019insecure} or lack efficient key management and revocation mechanisms~\cite{singla2021look}, limiting their practicality in real-world deployments.

In summary, a candidate protocol for authenticating initial broadcast messages in 5G networks must satisfy the following requirements: \textbf{[R1]:} The protocol must be efficient for both the signer and the verifier. It must minimize the computation overhead, especially on the verifier side, to preserve battery life for cellular devices without affecting the quality-of-service (QoS) and strict scheduling constraints of broadcast messages. 
\textbf{[R2]:} It must comply with the restriction on the maximum transmission unit of the broadcast radio messages, which further restricts the maximum size of public keys and signatures. In addition, the authentication protocol should limit the communication overhead due to certificates, signatures, and keys as minimal as possible, since additional bytes in broadcast radio packets transmitted over licensed spectrum induce additional costs to cellular service providers. 
\textbf{[R3]:} The protocol must be resilient against relay attacks by an adversary by just re-transmitting \texttt{System Information} messages from a legitimate base station without changes. \textbf{[R4]:} It can handle roaming scenarios, e.g. when the cellular device is outside the coverage area of its service provider and has to use the network of a partner cellular service provider. \textbf{[R5]:} Finally, the protocol can handle lawful interception for law enforcement agencies. Law enforcement agencies need to deploy their fake base stations to locate criminals and intercept their traffic. If an authentication scheme is deployed, law enforcement agencies must be able to authenticate their base stations to user devices.  

Our prior work proposes a broadcast authentication mechanism, \texttt{Schnorr-HIBS}~\cite{singla2021look}, using a hierarchical identity-based variant of Schnorr signature scheme. The proposed protocol introduces a new entity called core Private Key Generator (PKG) or core-PKG in the authentication server function in the 5G core network. Core-PKG generates public-private key-pairs for the Access and Mobility Management Function (AMF), which is the mobility anchor point in the core network and controls multiple base stations.
The AMF, in turn, generates public-private key-pairs for the base stations under its control. A base station uses its private key to sign \texttt{System Information} broadcast messages by following the proposed signature generation scheme to enable cellular devices to efficiently authenticate broadcast messages.
Although \texttt{Schnorr-HIBS} ensures better security and performance than the state-of-the-art, it has the following limitations: \ding{182} higher overhead in verification and communication 
and more attack surfaces due to the hierarchical architecture design, \ding{183} no practical solution for lawful interception, 
\ding{184} no end-to-end implementation and evaluation of the proposed scheme.

To address these challenges, in this paper, we introduce a novel \emph{verifier-efficient 2-layer Identity-Based Signature} (\scheme{}) scheme based on a highly efficient certificate-based method proposed in \cite{ARIS}. By leveraging the key-additive property of the new scheme, we propose a novel method to avoid the single point of failure common in most identity-based cryptosystems while allowing for fine-grained lawful interception.
Compared to \texttt{Schnorr-HIBS} \cite{singla2021look}, \scheme{} provides the following enhancements:  \ding{182} Simpler design and fewer attack surfaces due to the 2-layer design. \ding{183} Key generation for robust security and fine-grained lawful interception with the new key generation methods as discussed in Algorithm \ref{alg:IBS_lawful}. \ding{184} We incorporated the new scheme on an open-source 5G implementation and evaluated the overhead. \ding{185} 2x faster verification from the improved scheme. 

In summary, this paper makes the following contribution:
\ding{182} A comprehensive characterization of the attacks enabled by fake base stations for cellular networks.
\ding{183} \scheme{}, a verifier-efficient identity-based signature scheme and an authentication protocol, significantly reduces the attack surfaces and minimizes the performance overhead. 
\ding{184} A proof-of-concept implementation of our protocol and the integration with open-source 5G stack. We open-source our implementation \cite{E2IBS-github} with all the alternative schemes used for evaluation to provide a foundation for further research.

\section{Background}
\label{Background}
In this section, we first present some notations that will be used throughout the paper. We then briefly describe the architecture of the 5G cellular networks. We also introduce identity-based signatures (IBS), the basic building block of our authentication protocol.

\noindent \textbf{Notations.}  Given two primes $p$ and $q$, we define a finite field $\Fq$~and a group $\ZZ$.  We use \EC~as an elliptic curve (EC) over $\Fq$, where  $P\in \EC$ is the generator of the points on the curve.  We denote a scalar and a point on a curve with small and capital letters, respectively, e.g.,  $ x\Ra S $ denotes a random uniform selection of  $x$ from a set $S$. $||$ denotes string concatenation.  We define two hash functions  $ \h_1$: $ \{0,1\}^{*}  \rightarrow \{0,1\}^{k|t|}$  and  $ \h_2$: $ \{0,1\}^*  \rightarrow   \{0,1\}^{k'|k|}$, for parameters $t,k,k'\in \mathbb{N}$ defined in Section \ref{sec:scheme}. We view these hash functions as random oracles in our security 
analysis~\cite{RandomOracleModel93}.  We define $\PRF_x(\cdot)\rightarrow \mathbb{Z}_p$ as a Pseudo-Random Function initialized by secret $x$. $|x|$ defines the bit-length of variable $x$, (i.e.,$|x|\gets\log_2(x)$).
 
\subsection{5G Cellular Network Architecture}

A 5G cellular network consists of 3 main components (see Figure~\ref{network_architecture}): User Equipment (UE), Next Generation Radio Access Network (NG-RAN) and the 5G Core Network (5GC). 

\noindent  \textbf{UE} refers to the subscriber device (e.g., cell phone) used to access the cellular network. The UE is provided with a Universal Subscriber Identity Module (USIM) card, provisioned by 
a mobile network operator with the permanent identity of the UE, the Subscription Permanent Identifier (SUPI). 

\noindent \textbf{NG-RAN} consists of base stations that UEs can connect to using radio transmission. The base stations broadcast \texttt{System Information} messages, including a Master Information Block (MIB) and multiple System Information Block (SIB) messages periodically. 
MIB and SIB1 together are referred to as \textit{minimum SI}
to enable further communication between the UE and the base stations. The UE listens for SI messages and connects to the base station with the highest signal strength. 

\begin{figure}[t]
 \centering
        \includegraphics[width=\linewidth]{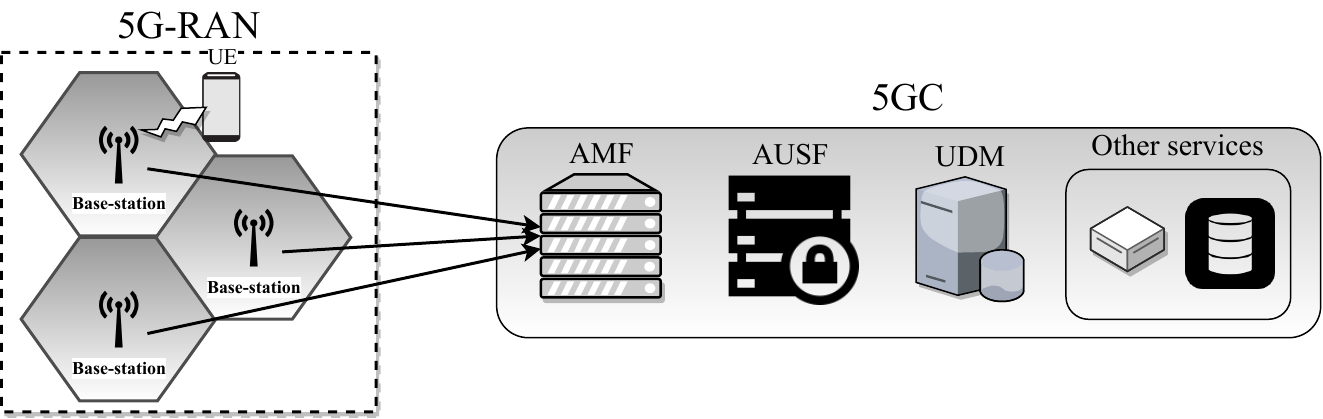}
        \caption{Cellular Network Architecture.}
        \label{network_architecture}
\end{figure}

\noindent  \textbf{5GC} is the brain of the 5G cellular network and houses several components to provide services to the UEs. An important component is the Access and Mobility Management Function (AMF).
supports UE authentication, mobility management, and paging, handles the NAS layer traffic and security, and checks UE's roaming rights. The AMF authenticates the UE in collaboration with the Authentication Server Function (AUSF) and Unified Data Management (UDM). 

\subsection{Identity-Based Signatures}

Digital signatures, achieved via public key infrastructure (PKI), offer several security properties, including message and sender authentication. 
In conventional PKI-based cryptography, the authenticity of public keys is derived through certificates. 
Communicating and verifying these certificates can incur additional overhead; this is especially important for mobile and battery-powered devices.  
Certificate-free cryptography was designed to address this overhead by deriving public keys directly from the user's identifying information, eliminating the need to communicate and verify these certificates. 
This is achieved by enabling a trusted third party (TTP) to compute the user's private key based on their identifying information (e.g., MAC address). 
We formally define the notion of identity-based digital signature in the following definition. 

\begin{definition}\label{def:ibs}

    An identity-based signature scheme $\IBS=(\setup,\keygen,\sign,\verify)$ is defined as follows. 
    \begin{itemize}
        \item  $(msk,params)\gets\IBS.\setup(1^\kappa)$: Given the security parameter $\kappa$, computes TTP's key pair ($msk,mpk$)  and the system parameters $params$. 
                For conciseness, we exclude $params$ as the input of the following algorithms.
                \item $(sk_U,C_U)\gets\IBS.\keygen(msk,U)$: Given the user identity $U$ and $msk$, the TTP computes a commitment value $C_U$ and the secret key $sk_U$.
                \item $\sigma_{m,U}\gets\IBS.\sign(m,sk_U)$: Given a message $m$ and the user's secret key $sk_U$, the user computes the signature $\sigma_{m,U}$.
                \item $\{\text{valid},\text{invalid}\}\gets\IBS.\verify(m,\sigma_{m,U},U,C_U,mpk)$: Given a message-signature pair $(m,\sigma_{m,U})$, the user's identity $U$, its commitment value $C_U$, and the master public key, this algorithm outputs either `valid' or `invalid.' 
            \end{itemize}
   
\end{definition}
The security of an IBS scheme is defined in the following.
\begin{definition}\label{def:eucma}
    The existential unforgeability under the chosen message attack (EU-CMA) for an identity-based signature scheme $\IBS$ is defined via the following experiment $Expt^{\text{EU-CMA}}_{\IBS}$ between a challenger \B~and the adversary \A. 
    \begin{itemize}
        \item The challenger \B~runs $\IBS.\setup()$ and sends $params$, including $mpk$, to \A.
                \item \A~interacts with the $(sk_{U_i},C_{U_i})\gets\mathcal{O}_{Corrupt}({U_i})$ and $(\sigma_{m,{U_i}})\gets\mathcal{O}_{Sign}(m,{U_i})$ oracles for any ${U_i}\in \{U_1, \dots, U_n\}$. 
            \end{itemize}
       \A~succeeds in the above experiment if it outputs a message-signature pair $(m^*,\sigma_{m^*,U^*})$ after a polynomially bounded number of interactions with the above oracles, given $U^* \notin \{U_1, \dots, U_n\}$ and $\sigma_{m^*,U^*}$ was never outputted from $\mathcal{O}_{Sign}(\cdot)$.  
\end{definition}

\begin{definition}\label{def:ecdl} Given an elliptic curve \EC~over a finite field \Fq, and $P,Q\in\EC$, the Elliptic Curve Discrete Logarithm (ECDL) problem asks to find $a\in\ZZ_p$ where $Q=aP \mod q$
    
\end{definition}

\section{\texorpdfstring{Characterization of Attacks Enabled\\ by Fake Base Station}{Characterization of Attacks Enabled by Fake Base Station}}
\label{False Base Stations}
Fake base stations (FBS) have been demonstrated to be feasible in real-world scenarios using Commercial-Off-The-Shelf (COTS) hardware and open-source cellular software stacks~\cite{strobel2007imsi, paget2010practical, kune2012location, shaik2015practical}. To operate a fake base station, an attacker configures their radio to transmit signals at a higher strength than legitimate base stations, enticing users to connect to it instead. Figure~\ref{FBS_a} illustrates a typical fake base station setup used for conducting off-path attacks, while Figure~\ref{FBS_b} depicts its configuration for executing Man-in-the-Middle (MitM) relay attacks.

\begin{figure}[t]
 \centering
    \begin{subfigure}{1\linewidth}
    \includegraphics[width=\linewidth]{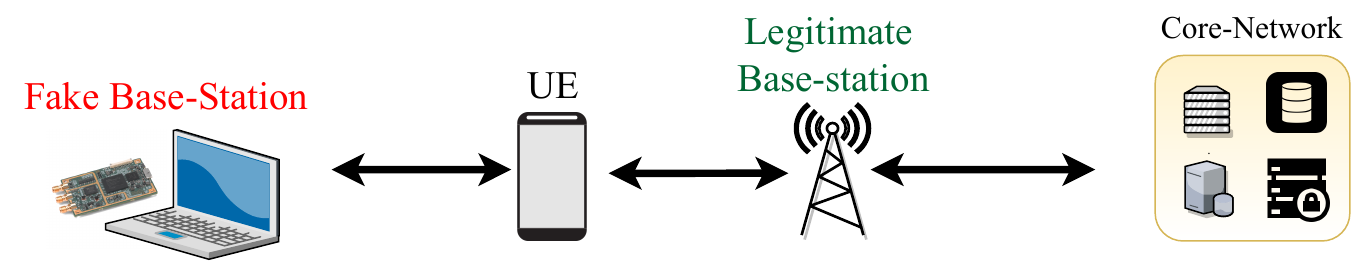}
        \caption{Setup for off-path attacks.}
        \label{FBS_a}
    \end{subfigure}
    
    \begin{subfigure}{1\linewidth}
    \includegraphics[width=\linewidth]{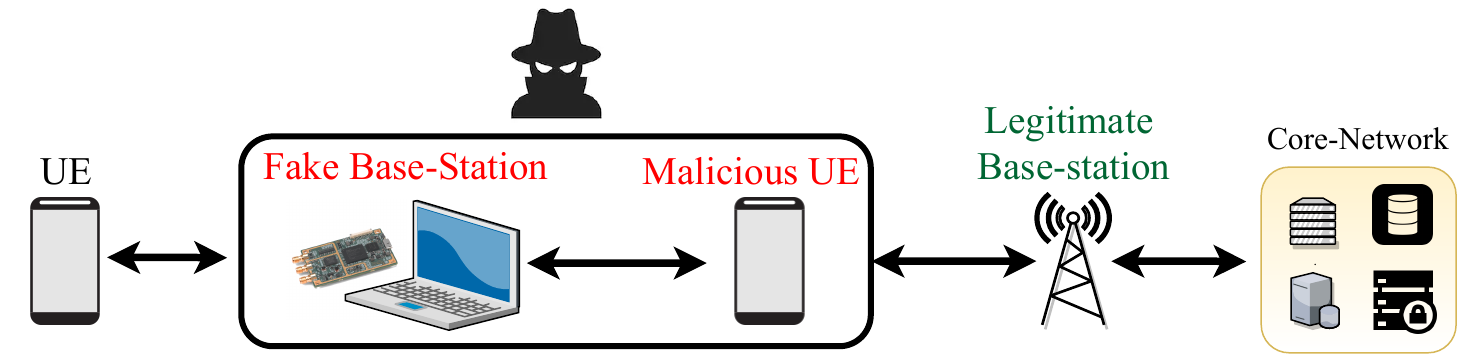}
        \caption{Setup for man-in-the-middle attacks.}
        \label{FBS_b}
    \end{subfigure}
        \caption{Common FBS configurations for carrying out attacks.}
    \end{figure}

To mitigate fake base station attacks, cellular protocol specifications have introduced enhancements across multiple generations, including 3G, 4G LTE, and 5G. A notable advancement in 5G is the introduction of the Subscription Concealed Identifier (SUCI), which protects against IMSI catchers. SUCI is an encrypted identifier used in the UE registration procedure, designed to conceal the Subscription Permanent Identifier (SUPI)---commonly known as the IMSI (International Mobile Subscriber Identity) in earlier generations. By rotating encryption keys, SUCI can change over time, making it significantly harder for IMSI catchers to track users. However, SUCIs are still transmitted over-the-air, which means an IMSI catcher could temporarily track a user if the SUCI value remains unchanged. Moreover, some network operators do not enable SUCI encryption in their systems, further undermining its effectiveness~\cite{nie2022measuring}.

\begin{table*}[ht]
\setlength\tabcolsep{4pt}
\renewcommand{\arraystretch}{0.6}
\fontsize{8}{6}\selectfont
\newcolumntype{P}[1]{>{\centering\arraybackslash}p{#1}}
\centering
\begin{tabular} {| c | P{2.8cm} | P{7.6cm} |}
\hline
\textbf{Attack} & \textbf{Attack Category} & \textbf{Impact} \\ \hline

Send RRC or NAS Reject messages \cite{shaik2015practical, hussain2018lteinspector, hussain20195greasoner, 3GPP:33.809} & DoS; Downgrade; Battery Depletion; & Denial of services; force UE to downgrade to older radio technology; Increase in power consumption for UE \\ \hline

Replay \texttt{RRC\_Resume\_Request}~\cite{3GPP:33.809} & DoS & Denial of services \\ \hline

\texttt{Authentication\_Request} with separation bit 0~\cite{rashid2024state} & DoS & Denial of services \\ \hline

Manipulate Self Organizing Networks (SON)~\cite{shaik2018impact} & DoS; Battery Depletion & Call dropping; Increase in power consumption for UE; Increased handovers and signaling load; Legitimate base station blacklisted \\ \hline

Modify \texttt{UE\_Capability\_Information}~\cite{shaik2019new} & Downgrade & Denial of some services; Lower data rate; Downgrade to 2G/3G \\ \hline

Authentication Relay Attack~\cite{hussain2018lteinspector} & Information Leak; DoS & Complete or selective DoS; Location history poisoning; Network profiling \\ \hline

5G AKA Bypass~\cite{5gbasechecker} & Information Leak & Monitor and manipulate user traffic; Provide Internet access; Phishing \\ \hline

NAS \texttt{Security\_Mode\_Command} replay~\cite{5gbasechecker} & Location Tracking; Fingerprinting & Check if the UE in the range of base station; Fingerprinting UE model \\ \hline

SUCI-Catcher~\cite{chlosta20215g} & Location Tracking; Information Leak &  Obtain user identifier; Track user movement\\ \hline

IMEI-Catcher~\cite{park2022doltest} & Location Tracking; Information Leak &  Obtain user identifier; Track user movement \\ \hline

Lullaby~\cite{hussain20195greasoner} & DoS; Battery Depletion &  Move UE to idle state; Increase in power consumption for UE \\ \hline

NAS counter reset~\cite{hussain20195greasoner} & DoS; Battery Depletion & Force UE reconnect; Increase in power consumption for UE \\ \hline

Authentication Sync Failure Attack~\cite{hussain20195greasoner} & DoS; Battery Depletion & Force UE reconnect; Increase in power consumption for UE \\ \hline

\texttt{Counter\_Check} Fingerprinting~\cite{park2022doltest, 5gbasechecker} & Fingerprinting & Fingerprinting UE model \\ \hline

\texttt{Authentication\_Request} Fingerprinting~\cite{rashid2024state} & Fingerprinting & Fingerprinting UE model \\ \hline

RRC \texttt{Security\_Mode\_Command} Bypass~\cite{kim2019touching} & Information leak & Monitor and manipulate user traffic  \\ \hline
\end{tabular}

\caption{Attacks enabled by fake base stations in 4G \& 5G cellular networks. }
\label{fbs_impact}
\end{table*}

Despite some advancements, these defenses fail to address the root cause of fake base station attacks: the lack of authentication for base stations during the connection bootstrapping process. Consequently, fake base station attacks remain feasible, even in 5G networks. To better understand the causes and implications of these attacks, we analyze fake base station attacks found in recent studies. A summary of these attacks and their impact is provided in Table~\ref{fbs_impact}.

\noindent \textbf{Denial-of-Service.}
Denial-of-Service (DoS) attacks are among the most prevalent tactics used by an FBS (Fake Base Station) attacker. By sending a reject message as defined in the specifications \cite{shaik2015practical, hussain2018lteinspector, hussain20195greasoner, 3GPP:33.809, rashid2024state}, the attacker can prevent the User Equipment (UE) from connecting to the network. Additionally, the attacker can manipulate the order or fields of control-plane messages \cite{3GPP:33.809, shaik2018impact, hussain2018lteinspector, hussain20195greasoner} to achieve a similar effect. As a result, the affected user cannot connect to a legitimate base station, rendering them unable to receive SMS, phone calls, or access the Internet. 

\noindent \textbf{Battery Depletion.}
Most cellular devices rely on battery power, making them vulnerable to energy-draining attacks. An FBS attacker can disrupt the UE's connection to the base station, forcing it into a reconnection loop that rapidly depletes the battery \cite{shaik2015practical, hussain2018lteinspector, hussain20195greasoner, 3GPP:33.809, shaik2018impact}. The frequent cell selection and registration procedures consume significant power, ultimately preventing the user from operating their device.

\noindent \textbf{Downgrade.}
In a downgrade, or bidding-down attack, the attacker forces the UE to connect to an older radio generation (e.g., 2G, 3G, LTE). This can be achieved by leveraging specific reject messages (e.g., \texttt{5GS\_Services\_Not\_Allowed}) \cite{shaik2015practical, hussain2018lteinspector, hussain20195greasoner, 3GPP:33.809} or manipulating fields in control-plane messages \cite{shaik2018impact}. Consequently, the user loses the benefits of modern radio technologies, such as faster speeds and stronger cryptographic protections. Since older generations like 2G and 3G employ weaker cryptography, attackers can exploit these connections to send fake SMS or execute other malicious activities.

\noindent \textbf{Information Leak.}
Certain FBS attacks can lead to sensitive information leakage \cite{hussain2018lteinspector, 5gbasechecker, chlosta20215g, park2022doltest, kim2019touching}. For instance, SUCI-Catcher \cite{chlosta20215g} and IMEI-Catcher \cite{park2022doltest} get the identifier of the user or device and can track the user's location. The Authentication Relay Attack \cite{hussain2018lteinspector}, 5G AKA Bypass \cite{5gbasechecker}, and RRC \texttt{Security\_Mode\_Command} Bypass attack can relay or bypass the authentication procedure between the base station and the UE, which causes the user traffic unencrypted. The attacker can monitor and manipulate the traffic and even provide Internet access to the user \cite{5gbasechecker}. Thus, the attacker can hijack users into phishing websites and perform more complicated attacks. 

\noindent \textbf{Fingerprinting.}
By analyzing UE responses to identical messages, attackers can infer device characteristics such as the baseband vendor or software version \cite{5gbasechecker, park2022doltest, rashid2024state}. With this information, attackers can execute targeted attacks tailored to specific devices or software implementations. 

\noindent \textbf{Location Tracking.}
Many attacks \cite{5gbasechecker, chlosta20215g, park2022doltest} enable location tracking of a specific UE. IMEI-Catcher \cite{park2022doltest} captures the permanent identifier of the device, IMEI (International Mobile Equipment Identity), and SUCI-Catcher \cite{chlosta20215g} captures SUCI, which is an identifier of the user. With an FBS network, the attacker can track user location if the same identifier is used elsewhere. Additionally, the NAS \texttt{Security\_Mode\_Command} replay attack \cite{5gbasechecker, hussain2018lteinspector} can verify the presence of a UE within the attacker's range by replaying previously successful \texttt{Security\_Mode\_Command} messages.

To mitigate these threats, ensuring UE authentication of base stations is crucial. This paper aims to address this gap and propose solutions to enhance security against FBS attacks.

\section{Overview of Our Solution}
\label{Overview}
\noindent \textbf{Adversary Model.}
We consider a Dolev-Yao adversary model~\cite{dolev1983security} in which the adversary can drop, modify, inject, or eavesdrop on messages sent by legitimate participants over the public radio channel. 
According to this model, the adversary is capable of setting up fake base stations and emitting unauthenticated broadcast messages with a higher signal strength than the legitimate base stations. 
However, the adversary cannot physically access and tamper the legitimate base stations, cellular devices, or core network components, and cannot access the secret keys or other sensitive information stored in a target cellular device's USIM or base stations.

\noindent \textbf{Scope of our Solution.}
Our solution allows cellular devices to reliably authenticate a base station before establishing a connection
by ensuring the authenticity of the public broadcast messages. We do not consider passive attacks caused by adversaries eavesdropping the traffic between the target device and legitimate base stations over the public radio channel. We also do not consider DoS attacks using a wireless jammer operating at the physical layer. Finally, our solution is envisioned for 5G cellular networks but can be extended to 4G LTE, 3G, and 2G networks with minimal modifications. 

\noindent \textbf{Our Authentication Protocol.}
Our protocol allows a UE to verify the identity of the base station it is connecting to and validate the authenticity of \texttt{System Information} messages being sent by the base station. Our protocol is based on an IBS scheme (details in Section~\ref{Crypto Scheme}) and organized according to a 2-layered system consisting of: \ding{182} the core-PKG (hosted by the 5GC),  and \ding{183} base stations. We provide a high-level overview of our authentication protocol below (see Section~\ref{Detailed Design} for further details). 

The core-PKG, co-located with the Authentication Server Function (AUSF) in the core network, is responsible for generating the public-private key pairs for all the AMFs deployed for a particular network operator. 
At first, core-PKG generates its public-private key pair $[sk_{PKG}, PK_{PKG}]$ during the initial setup phase. The $PK_{PKG}$ is then provisioned in the USIM of all UEs for that particular network operator. 

The base stations controlled by that network operator periodically send a key generation request to their network operator's core-PKG. The base stations send their cell IDs, i.e., \texttt{NRCell\_ID} to the core-PKG and receive a public-private key pair [$sk_{BS}$, $PK_{BS}$] and the base station identity $U_{BS}$. The $U_{BS}$ is a concatenation of \texttt{NRCell\_ID} and the expiration timestamp of the particular key pair.

The base stations use their assigned private key $sk_{BS}$ to sign the \texttt{System Information} broadcast messages using the \scheme{} (Section~\ref{Crypto Scheme}) and generate a signature $sig_{SIB1}$. The base stations include $sig_{SIB1}$, $PK_{BS}$, and $U_{BS}$ in SIB1 message.
The UE uses this information to verify $sig_{SIB1}$. The UE first verifies if the key $PK_{BS}$ is expired by checking the expiry timestamps embedded in the $U_{BS}$. If the timestamp is not expired, the UE verifies the signature of the \texttt{System Information} message. If this verification step is successful, the UE connects to the base station.

\section{New Efficient 2-Layer Identity-Based Signature Scheme (\scheme{})}\label{sec:scheme}
\label{Crypto Scheme}
In this section, we provide the general definition and discussion of our newly proposed scheme, the Efficient 2-layer Identity-Based Signature Scheme (\scheme{}).

In PKI-based schemes, the security of cryptographic primitives (e.g., digital signatures) hinges on the authenticity of public keys. In conventional systems, this is achieved by digital certificates or certificate chains. 
For instance, to verify a digital signature, the verifier must confirm the authenticity of the signer's public key by ensuring the validity of the associated certificates. However, the communication and computation overhead introduced by certificates might not be tolerable in some applications (mobile devices operating in low-bandwidth environments). To address this, in identity-based cryptography, the user's public key is derived from their publicly available information (e.g., IP address).  
Existing efficient identity-based signature schemes (e.g., \cite{singla2021look}) are primarily based on the Schnorr signature \cite{Schnorr91}. 
Despite their elegant design, their inherent design and the key generation process (which derives keys from the Schnorr signature) can result in less efficient verification algorithms. 

With the efficiency and security requirements of 5G networks in mind, we present a new Efficient 2-layer Identity-Based Signature (\scheme{}).  \scheme{}, presented in  Algorithm \ref{alg:IBS}, offers highly efficient signing and verification, ensures high resiliency by avoiding a single point of failure, and supports fine-grained lawful interception.  
This is achieved by deriving a new identity-based signature from the highly efficient certificate-based scheme, ARIS \cite{ARIS}. The efficiency of ARIS is due to the ability to convert costly exponentiation operations to a few point additions by utilizing the homomorphic property of the underlying one-way function. This results in computation efficiency in both signing and verification algorithms. 

As depicted in Algorithm \ref{alg:IBS}, after the parameter selection (similar to \cite{ARIS,Tachyon}), the Setup algorithm computes the $t$ public key elements $mpk=\{Z_i\}_{i=1}^t$ and publishes the public parameters. We note that parameters $t$ and $k$ are related to the k-combinatorial problem \cite{Tachyon,ARIS}, i.e., ${t \choose k}\geq 2^\kappa $ (for security parameter $\kappa$)  and play an important role in providing storage and computation overhead trade-off.  For instance, a larger $t$ results in larger keys but more efficient signing and verification, as it allows for a smaller $k$.   During the extract algorithm, by harnessing the scheme in \cite{ARIS}, the PKG computes the user's key pair ($sk_U,{C}_U$) based on the provided identity $U$.   
After computing the user keys and considering their structure, the signing and verification processes can be carried out in a manner similar to that in \cite{Schnorr91}.
 
\newcommand{\algrule}[1][.2pt]{\par\vskip.3\baselineskip\hrule height #1\par\vskip.3\baselineskip}

\begin{algorithm}\caption{$\scheme{}$}\label{alg:IBS}
\small
$(msk,params)\gets\setup(1^\kappa)$
\algrule[0.5pt]
\begin{algorithmic}[1]
 
\item Given $\kappa$, select $p,q$, $msk \Ra \mathbb{Z}_p$ and  $t,k\Ra \mathbb{N}$ where   ${t \choose k}\geq 2^\kappa $
\item  Compute $z_i \gets \PRF_{msk}(i)$  and $Z_i \gets z_iP \mod q$ $\mathbf{for}$  $i= \{1,\dots,t\}$ and set $\mathbf{Z}\gets \{Z_i\}_{i=1}^t$
\item  Output $msk$ and $params=(mpk,p,q,t,k)$, where $mpk=\mathbf{Z}$
 
\end{algorithmic}
\algrule[0.5pt]

 $(\sk_U,{C}_U)\gets\sgnextract(msk,U)$
\algrule[0.5pt]

\begin{algorithmic}[1]

\item Compute $u \gets \PRF_{msk}(U)$ and ${C}_{U}\gets u P \mod q$ 
\item Compute $\{j_1\dots, j_k\}\gets \h_1(U,C_{U})$ where each $|j_i| = |t|$

 \item Compute $x_{U}\gets   \sum_{i=1}^{k} z_{j_i} + u \mod p$  
  \item Output $(\sk_U  = x_{U}$, ${C}_U$) 
\end{algorithmic}
\algrule[0.5pt]

$\sigma_{m,U}\gets\sign(m,\sk_U)$
\algrule[0.5pt]
\begin{algorithmic}[1]
\item Select $r \Ra \mathbb{Z}_p$ and compute $  h\gets \h_2(m,rP \mod q) $
\item Compute $s\gets r - h \times \sk_U$
\item Outputs $\sigma_{m,U} = (s, h)$

\end{algorithmic}

\algrule[0.5pt]
$\{\text{valid},\text{invalid}\}\gets\verify(m,\sigma_{m,U},U,{C}_U,mpk)$
\algrule[0.5pt]
\begin{algorithmic}[1]

\item Compute $\{j_1,\dots,j_{k}\} \gets \h_1(U,C_U)$ 
\item $R'\gets sP+h(\sum_{i=1}^k \mathbf{Z}[j_i] \mod q +C_U)$
\item Output `valid' \textbf{if} $h=\h_2(m,R')$, \textbf{else} output `invalid'

\end{algorithmic}

\end{algorithm}

\subsection{Robust Security and fine-grained lawful interception} \label{sec:lawful}
Identity-based systems solely rely on the PKG to generate the keys for all users, creating a single point of failure. Thus, if the PKG is compromised, the entire system's security is at risk. To address this vulnerability, we leverage the additive property of the key generation algorithms of \scheme{} to propose new key generation methods (Algorithm \ref{alg:IBS_lawful}). This is achieved by dividing the signer's key into two components, $u_1$ and $z_U$. During the new key generation process, the signer selects $u_1$ and computes its commitment $Q_u$.
The PKG then computes the other secret component, i.e., $z_U$, by deriving an ARIS signature (see $\sgnextract(\cdot)$ in Algorithm \ref{alg:IBS_lawful}) on user identity $U$ and $Q_u$ (implicit certification).  After receiving $z_U$, by leveraging the additive property of \scheme{}, the signer computes the \emph{final} secret key as $x_U\gets u_1+z_U \mod p$. In this case, even if the PKG is compromised, the user secret key remains secure since the adversary needs knowledge of $ u_1$ to compute it. 

With this improvement, the new key generation algorithm can also enable a fine-grained lawful interception by allowing the signer to reissue its key by running $(u_1,Q_U)\gets \userkg(params)$  and requesting a new $x_U$ from the PKG. This requires including a sequence number $t$, supplied by the user $U$, in the input of the hash function $\h_1(\cdot)$ during the $\sgnextract(\cdot)$ algorithm. Using the sequence number in key generation prevents the misuse of the old keys and provides an efficient approach for fine-grained control of the key lifespan. 

\begin{algorithm}\caption{Key Generation for robust security and fine-grained lawful interception}\label{alg:IBS_lawful}
\small
$(u_1,Q_U)\gets\userkg(params)$
\algrule[0.5pt]

\begin{algorithmic}[1]
 
\item Compute $u_1 \gets  \ZZ_p$
\item Compute $Q_U \gets u_1 P \mod q $
\item Output $(u_1,Q_U)$
 
\end{algorithmic}
\algrule[0.5pt]

 $(x_U,{C}_U)\gets\sgnextract(msk,Q_U)$
\algrule[0.5pt]

\begin{algorithmic}[1]

\item Compute $u_2 \gets \PRF_{msk}(U)$ and ${B}_{U}\gets u_2 P \mod q$ 
\item Compute $C_U \gets Q_U+B_U \mod q$
\item Compute  $\{j_1\dots, j_k\}\gets \h_1(U,C_{U})$ where each $|j_i| = |t|$

 \item Compute $z_{U}\gets   \sum_{i=1}^{k} z_{j_i} + u_2 \mod p$  
  \item Output $( z_{U}, B_U,{C}_U$) 
\end{algorithmic}
\algrule[0.5pt]

$\sk\gets\cmpkey(u_1,z_U)$
\algrule[0.5pt]

\begin{algorithmic}[1]
 
\item Compute $x_U\gets u_1+z_U \mod p$
\item Output ($x_U,Q_U$)
 
\end{algorithmic}

\end{algorithm}

\subsection{Security Analysis}
\begin{theorem}
    In the random oracle model, if an adversary \A~can break the scheme proposed in Algorithm \ref{alg:IBS}, in the sense of Definition \ref{def:eucma}, then one can construct another algorithm \C~that runs the adversary and \A~as a subroutine and can solve an instance of the ECDL problem in Definition \ref{def:ecdl}.
\end{theorem}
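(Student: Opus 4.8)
The plan is to build, in the random oracle model, a reduction \C{} that takes an ECDL instance $(P,Q)$ with $Q=aP$, runs the forger \A{} as a subroutine, and recovers $a$ through a rewinding (forking) argument. The key observation is that, by the computation behind $\verify$, the per-identity verification key is precisely a Schnorr public key $X_U=\sum_{i=1}^k \mathbf{Z}[j_i]+C_U$, where $\{j_i\}=\h_1(U,C_U)$, and a valid signature $(s,h)$ satisfies $R'=sP+hX_U$ with $h=\h_2(m,R')$. Thus \scheme{} behaves exactly like Schnorr over the key $X_U$, and the strategy is to plant the challenge $Q$ as the verification key of the (guessed) target identity $U^*$ and extract $a$ from a forgery under it.

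\textbf{Setup and embedding.} First \C{} generates the master material honestly, sampling $z_i$ and setting $Z_i=z_iP$ for $i=1,\dots,t$, then hands $params$ (with $mpk=\{Z_i\}$) to \A{} and answers $\h_1,\h_2$ by lazy sampling. \C{} guesses the target $U^*$ among the polynomially many identities \A{} queries. For $U^*$ it draws fresh indices $(j_1^*,\dots,j_k^*)$, sets the commitment $C_{U^*}:=Q-\sum_{i=1}^k Z_{j_i^*}$, publishes it as $U^*$'s commitment, and programs $\h_1(U^*,C_{U^*}):=(j_1^*,\dots,j_k^*)$, which forces $X_{U^*}=Q$. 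Because $Q$ is uniform, $C_{U^*}$ is distributed as an honest commitment, so the embedding is undetectable; note \C{} does not learn $\log_P C_{U^*}$ and hence cannot compute $x_{U^*}$, which is acceptable since a winning \A{} never corrupts $U^*$.

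\textbf{Oracle simulation.} For every $U_i\neq U^*$, \C{} answers $\mathcal{O}_{Corrupt}(U_i)$ exactly as $\sgnextract$ prescribes (it knows all $z_i$ and picks $u$ itself), and answers $\mathcal{O}_{Sign}(m,U_i)$ honestly with the real key. For $U^*$ it cannot sign directly, so it uses the standard Schnorr simulation: sample $s,h$, set $R'=sP+hQ$, program $\h_2(m,R'):=h$, and return $(s,h)$; this aborts only if $(m,R')$ is already defined, an event of negligible probability. All transcripts match the real experiment, so conditioned on the guess of $U^*$ being correct \A{} forges with essentially its original advantage.

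\textbf{Extraction and the main obstacle.} A valid forgery $(m^*,(s^*,h^*))$ on $U^*$ gives $R'^*=s^*P+h^*Q$ and $h^*=\h_2(m^*,R'^*)$, so except with negligible probability \A{} queried $\h_2$ at $(m^*,R'^*)$. I would then invoke the general forking lemma: rewind \A{} to that query, reanswer with a fresh $h'^*\neq h^*$, and obtain a second valid forgery $(m^*,(s'^*,h'^*))$ for the same $R'^*$. Combining the two verification equations $s^*P+h^*Q=s'^*P+h'^*Q$ yields
\[
a=(s^*-s'^*)\,(h'^*-h^*)^{-1}\bmod p,
\]
solving ECDL. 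The routine parts are the distributional claims (uniformity of $C_{U^*}$, soundness of the Schnorr sign simulation, negligible aborts). The delicate part, and the main obstacle, is the rewinding analysis: I must argue that the forked $\h_2$-query is exactly the one that fixes $R'^*$, that the transcript prefix (hence $R'^*$) is reproduced identically after reprogramming, and then bound the joint success probability. This is where the forking lemma's quadratic loss appears, which together with the $1/q_I$ cost of guessing $U^*$ and the factor for locating the right $\h_2$-query gives a final ECDL advantage polynomially related to \A{}'s, of order roughly $\epsilon^2/(q_I\,q_h)$ up to negligible terms.
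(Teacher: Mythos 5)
Your proposal takes a genuinely different route from the paper's. The paper plants the ECDL challenge in the \emph{master} public key: it picks a random slot $j^*\in\{1,\dots,t\}$, sets $Z_{j^*}=Z^*$, programs $\h_1$ to avoid $j^*$ on corrupted identities, and succeeds whenever the forgery's index set $\h_1(U^*,C_{U^*})$ happens to contain $j^*$ (probability roughly $k/t$, independent of which identity is attacked). You instead plant the challenge in the \emph{per-identity} verification key of a guessed target, via the commitment $C_{U^*}:=Q-\sum_{i=1}^k Z_{j_i^*}$ --- the classic certification-style Schnorr-IBS reduction. Your signing simulation, the uniformity claims, and the extraction algebra $a=(s^*-s'^*)(h'^*-h^*)^{-1}\bmod p$ are all correct as far as they go.

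The genuine gap is in the embedding step itself: in this scheme there is no mechanism by which \C~can bind its planted commitment to $U^*$. The commitment $C_U$ is an explicit input of \verify supplied together with the forgery, and under Definition \ref{def:eucma} the winning identity satisfies $U^*\notin\{U_1,\dots,U_n\}$, i.e., it was never submitted to $\mathcal{O}_{Corrupt}$ or $\mathcal{O}_{Sign}$; hence \C~never gets an opportunity to ``publish'' anything for $U^*$, and your guessing step is aimed at the wrong set, since the target is by definition \emph{not} among the identities \A~queries to the oracles. A forger is free to pick its own point $C$, query $\h_1(U^*,C)$, and forge against $X=\sum_{i=1}^k \mathbf{Z}[j_i]+C$. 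Because your reduction knows every $z_i$, forking on such a forgery only yields $\log_P X=\sum_{i} z_{j_i}+\log_P C$, i.e., the discrete log of an adversary-chosen point --- it says nothing about $a=\log_P Q$, so the reduction fails exactly on the forgeries the definition allows. The paper's master-key embedding is designed to survive this: whichever pair $(U^*,C_{U^*})$ the forgery uses, as long as $j^*\in\h_1(U^*,C_{U^*})$ the unknown exponent $z^*$ enters the extracted discrete log through $Z_{j^*}$ rather than through a commitment that the reduction would have to force on the adversary. (Your strategy would be sound in a model where the PKG-issued commitment is authentically bound to each identity and the forgery must verify against it --- but that is a strictly stronger assumption than the game in Definition \ref{def:eucma}, and it is precisely what a reduction may not assume.)
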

 
\begin{proof}

    Given  $X\gets \EC$ as an instance of the ECDL problem, \C~works as follows to find a solution $z^*\gets \ZZ_p$, such that $z^*P=Z^* \mod q$. 
    
    \noindent\emph{Setup:} \C~keeps two lists ($L_1, L_2$) to keep track of the output of the random oracles $\h_1(\cdot)$ and $\h_2(\cdot)$  and lists $L_\sigma$ and $L_U$ to keep track of the messages submitted to the sign and corrupt oracles, respectively. \C~sets up the following random oracles to handle queries to hash functions. 
    \begin{itemize}
        \item $\alpha_1\gets \h_1$-$\mathtt{sim}(U,C_{U},L_1)$: If the input (i.e., $U,C_{U}$) already exists, it returns the corresponding $\alpha_1$, else, it returns  $\alpha_1 \Ra \{0,1\}^{k|t|}$ and stores $(U,C_{U},\alpha_1)$   in $L_1$. 
        
        \item $\alpha_2\gets \h_2$-$\mathtt{sim}(m, R, L_2)$: If the input $(m,R)$ already exists, it returns the corresponding $\alpha_2$, else, it returns  $\alpha_2 \Ra  \ZZ_p$ and stores $(m, R,\alpha_2)$   in $L_2$. 
            
            \end{itemize}
     Next, \C~selects a target index $j^*\gets \{1,\dots,t\} $ and sets the target $mpk$ element $ Z_{j^*} =  Z^*$. Then it selects $z_i \Ra \ZZ_p$ and compute  $Z_i\gets z_i P \mod q $ where $i\in\{1,\dots,t\}$ and $i\neq j^*$ and output $mpk=(Z_i, \dots, Z_t)$.

    \noindent \emph{Queries:} 
    
    \begin{itemize}
    \item \emph{Hash queries:} Hash queries on $\h_1$, $\h_2$ and $\h_3$ will be handled by   $\h_1$-$\mathtt{sim}(\cdot)$, $\h_2$-$\mathtt{sim}(\cdot)$ and $\h_3$-$\mathtt{sim}(\cdot)$ functions defined above, respectively. 
    \item \emph{$\mathcal{O}_{Corrupt}({U})$ Queries:} Given a user $U$, if $U$ exists in $L_U$, it returns $(x_{U})$. Next, it checks $L_1$; if such $U$ exits with an index corresponding to $j^*$, it aborts. Else, it selects $u\Ra \ZZ_p$, computes $U\gets uP \mod q$. Next, it selects $j_i\Ra\{1,\dots,t\}$ for $i=\{1,\dots,k\}$ and $j_i \neq j^*$, for each $j_i$, recovers the corresponding $z_{j_i}$ and computes and returns the secret key $x_{U}\gets \sum_{i=1}^k z_{j_i}+u \mod p$. To respond to future queries, the output is stored in $L_U$. 
    \item \emph{$\mathcal{O}_{Sign}(m,U)$ Queries:} For signature queries for users $U$ where $\alpha_1 $ does not contain $j^*$, \C~can work similar to the $\sign(\cdot)$ to generate the signature. Otherwise, when $\alpha_1$ contains $j^*$, \C~uses its access to the random oracle and works similarly to Schnorr Signature to generate a valid signature for \A.
    
    \end{itemize}
    \noindent \emph{$\A$'s Forgery:}  Finally, $\A$ will output a forgery message-signature pair $(m^*,\sigma_{U^*})$. $\A$ wins the game if $\text{`valid'}\gets\scheme.\verify(m^*,\sigma_{U^*},U^*,{C}_{U^*},mpk)$ and  $m^*$ was not submitted to $\mathcal{O}_{Sign}(\cdot)$.   

\noindent \emph{Solving the hard problem:} After outputting a valid forgery by $\A$, $\C$ checks if for $U^*$  the target public key element $Z^*$ is embedded $\alpha_1$. Else, it fails. If   $\alpha_1$ indeed contains $j^*$, similar to \cite{Tachyon,ARIS}, \C~utilizes the forking lemma \cite{Bellare-Neven:2006}, to obtain a second forgery $m^*,\sigma'_{U^*}$, where with a very high probability $s^*\neq s'$ and $h^*= h' $. Then, given the results of Lemma 1 in \cite{Bellare-Neven:2006}, to solve the ECDL problem.
    
\end{proof}

\begin{cor}
 The key generation algorithm provided in Algorithm \ref{alg:IBS_lawful} offers robust security by  preventing a single point of failure inherent in identity-based schemes. 
\end{cor}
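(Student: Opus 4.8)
The plan is to turn the informal statement ``compromising the PKG does not reveal $x_U$'' into a reduction to the ECDL problem of Definition~\ref{def:ecdl}. First I would pin down the threat model that the corollary implicitly assumes: a \emph{PKG-compromise} adversary \A that is handed the entire PKG state, i.e.\ $msk$ (hence every $z_i$ and the ability to evaluate $\PRF_{msk}(\cdot)$), together with the full transcript of the interactive key generation of Algorithm~\ref{alg:IBS_lawful} for a target user $U$ --- namely the public commitment $Q_U$ output by $\userkg$ and the triple $(z_U,B_U,C_U)$ output by $\sgnextract$. The adversary's goal is to produce the \emph{final} signing key $x_U=u_1+z_U\bmod p$ (equivalently, to forge a signature verifying under $U$). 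The claim to establish is that any such \A yields an ECDL solver, so that the scheme retains security even under total corruption of the PKG.

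The key structural observation, which I would establish next, is that every value the PKG ever computes is a function of $msk$ and the \emph{point} $Q_U$ alone, and is completely independent of the discrete logarithm $u_1$. Indeed $u_2=\PRF_{msk}(U)$, $B_U=u_2P$, $C_U=Q_U+B_U$, $\{j_i\}=\h_1(U,C_U)$, and $z_U=\sum_{i=1}^{k} z_{j_i}+u_2$ can all be evaluated without ever knowing $u_1$; the only coupling between $u_1$ and the PKG side occurs through the one-way point addition $C_U=Q_U+B_U$. Consequently the secret $u_1$ leaves exactly one public footprint, the commitment $Q_U=u_1P$, and the final key can be written as $x_U=u_1+z_U$ with $z_U$ known to the corrupted PKG. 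Thus recovering $x_U$ is equivalent to recovering $u_1$, i.e.\ to computing the discrete logarithm of $Q_U$ to base $P$.

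Given this, the reduction \C is direct and requires no random oracle or forking. On an ECDL instance $(P,Q)$ for which \C must find $a$ with $Q=aP$, I would have \C sample $msk$ honestly (and later expose it to model full PKG compromise), plant the challenge as the target user's commitment by setting $Q_U:=Q$ (so that implicitly $u_1=a$ is unknown to \C), and then run $\sgnextract$ exactly as written to obtain $(z_U,B_U,C_U)$; by the observation above this needs only $msk$ and the point $Q$. \C hands $msk$ and $(Q_U,z_U,B_U,C_U)$ to \A, and when \A returns $x_U$, \C outputs $a:=x_U-z_U\bmod p$. Correctness is immediate from $x_U=u_1+z_U$, and the simulation is \emph{perfect}: each quantity \C supplies has exactly its real-scheme distribution, since it is produced by the genuine algorithms on an honestly sampled $msk$ and a uniformly distributed $Q_U$ (the ECDL challenge point being a uniform multiple of $P$).

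The main obstacle, and the part I would argue most carefully, is justifying that a corrupted PKG genuinely learns nothing about $u_1$ beyond $Q_U$ --- in particular that injecting $Q_U$ as an externally supplied point, rather than generating it from a known $u_1$, does not perturb any distribution seen by \A; this is precisely where the independence of $u_2$ and $\{z_{j_i}\}$ from $u_1$ and the homomorphic structure $C_U=Q_U+B_U$ are essential. A secondary point is to bridge ``recovering $x_U$'' and ``forging under $U$'': if one prefers to state robustness as unforgeability under PKG compromise, I would compose the present argument with the forking-lemma extraction used in the proof of Theorem~1, which pulls $x_U$ out of two related forgeries, after which the same subtraction $u_1=x_U-z_U$ closes the reduction to ECDL.
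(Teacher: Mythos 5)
Your proposal is correct, and its central structural observation is exactly the paper's argument: the final key $x_U=\sum_{i=1}^{k} z_{j_i}+u_2+u_1 \bmod p$ contains the component $u_1$, which the PKG never sees and whose only public footprint is the commitment $Q_U=u_1P$. The difference is in how far each side takes this. The paper's proof stops at the informal observation --- a compromised PKG knows $\sum_{i=1}^{k} z_{j_i}+u_2$ but not $u_1$ --- supplemented by two remarks you do not develop: that feeding $C_U$ (and hence $Q_U$) into $\h_1(\cdot)$ binds the index selection to the user's commitment, and that the user can check the correctness of the PKG-supplied $z_U$ by running the ARIS verification on it. You, by contrast, fix a precise threat model (the adversary receives $msk$ and the honest key-generation transcript), note that everything computed in $\sgnextract$ is a function of $msk$ and the \emph{point} $Q_U$ alone, and convert key recovery into an explicit ECDL reduction with a perfect simulation ($Q_U:=Q$, answer $a:=x_U-z_U \bmod p$), sketching the forking-lemma composition for the forgery formulation. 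This buys a concrete hardness statement that the paper never actually proves, and it is the right formalization of the corollary. The one caveat worth adding: your reduction models a \emph{passively} compromised PKG, since \C runs $\sgnextract$ honestly; a PKG that deviates during key generation (e.g., choosing $B_U$ as a function of $Q_U$ so that $C_U$ has a discrete logarithm known to it) falls outside your theorem and can still produce signatures verifying under a rogue commitment for identity $U$ --- an issue inherent to identity-based schemes, and precisely what the paper's binding and ARIS-verification remarks are gesturing at. A sentence delimiting this residual trust assumption would make your proof strictly stronger than the paper's.
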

\begin{proof}
      In the original scheme (i.e., Algorithm \ref{alg:IBS}), the secret component of the user key is computed as $x_U\sum_{i=1}^{k} z_{j_i} + u \mod p$, where both $z_j$'s and $u$ are known and selected by the PKG.  
      Consequently, a compromised PKG can issue private keys on behalf of the user. In the key generation algorithm with robust security in Algorithm \ref{alg:IBS_lawful}, the secret component of user key is computed as $x_U = \sum_{i=1}^{k} z_{j_i} + u_2+u_1$ where $u_1$ is not known to the PKG. This offers the binding property by incorporating the commitment of $u_1$ as the input of the hash function $\h_1(\cdot)$. We note that the correctness of the key supplied by the PKG can be simply  verified by the user by running the verification algorithm in ARIS \cite{ARIS} on $z_U$.
\end{proof}

\section{Instantiation of \scheme{} for 5G Networks}
\label{Detailed Design}
In this section, we discuss the limitations of \texttt{Schnorr-HIBS} and introduce our new scheme. 

The preliminary version of \scheme{}, \texttt{Schnorr-HIBS} has a hierarchal design with 2 PKGs. As shown in Figure \ref{pre_protocol}, the core-PKG generates key pairs for AMFs as the first-level PKG. The AMFs, serve as second-level PKGs, generate key pairs for their corresponding base stations. Then, the base stations sign their \texttt{SIB1} messages using the received private keys and broadcast the message to the UEs. The UEs, with the $PK_{PKG}$ provisioned, need to receive all the public keys and identities from the base station and the AMF to verify the signature. 

\begin{figure*}[t]
 \centering
    \includegraphics[width=0.9\linewidth]{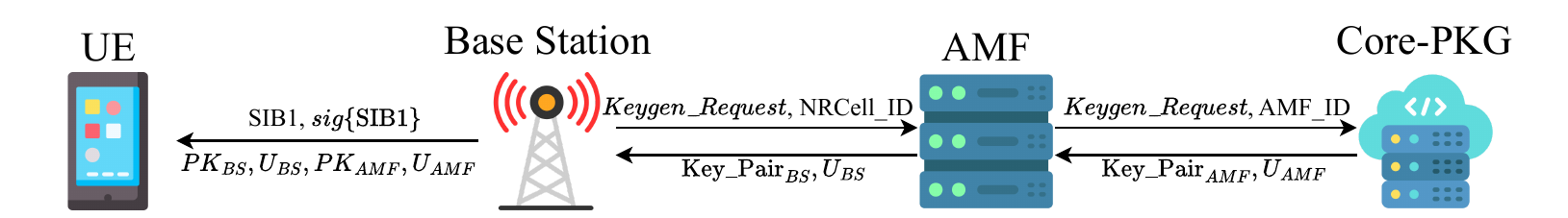}
    \caption{Instantiation of \texttt{Schnorr-HIBS}. }
        \label{pre_protocol}
\end{figure*}

\subsection{\texorpdfstring{Limitations of \texttt{Schnorr-HIBS} \cite{singla2021look}}{Limitations of Schnorr-HIBS}}
\label{limitations}

\noindent \textbf{Hierarchal Architecture. }
\texttt{Schnorr-HIBS} proposed a hierarchal architecture signature scheme, which supports multiple layers from the PKG to the verifier. In the instantiation, the core-PKG and the AMF serve as the first- and second-level PKG. However, this hierarchal architecture has several limitations: \ding{182} Using AMFs in the middle creates more complexity for the scheme. The operators need to implement and manage the crypto functionalities in AMF, which introduces more cost. A misconfigured AMF can make the system unavailable and even leak its private keys. If an attacker has control over an AMF, it can sign its own fake base stations and the user cannot detect it from the signature. \ding{183} The core-PKG is centralized, it can be configured with powerful hardware and the signature extraction is very fast. However, the AMF is distributed and can have various types of configurations. It can be difficult to ensure service quality if one AMF is serving a large amount of base stations. \ding{184} Adding additional levels in the scheme will cause more overhead on the verifier side. Because the verifier needs to get the public keys for all layers and verify the signature with more arithmetic operations. In addition, the multi-level design introduces more communication costs. 

\noindent \textbf{Lawful Interception. }
\texttt{Schnorr-HIBS} assumed that the law enforcement department could obtain key pairs for base stations from the AMF and set up their fake base stations like a normal base station. However, it did not provide a fine-grained access control mechanism to limit the usage of the key and revoke the key if needed. In that case, the law enforcement department can set up a fake base station to intercept user traffic at any time and any location with a valid key pair. This may violate the authorized scope, and give the law enforcement department more power to track users than it should have. Furthermore, if the key is leaked, the attacker can set up \textit{authenticated} fake base stations, which are considered valid by the UE. 

\noindent \textbf{End-to-end Implementation. }
\texttt{Schnorr-HIBS} did not provide an end-to-end implementation for the proposed scheme. Thus, it overlooked the challenges of deploying the proposed scheme in the real world. For example, the hierarchal architecture requires new implementation for both core-PKG and AMF, increasing the complexity and the development and maintenance costs. Also, the base station needs to send the signature of \texttt{SIB1} along with the public key and identity of both the base station and the AMF to the UE. The total overhead is 150 bytes, which takes a large amount of the available space in \texttt{SIB1} (372 bytes) \cite{3GPP:38.331} message. A commercial base station may not be able to append this information after the existing configurations. 

\subsection{Design Decisions of \scheme{}}
\label{design_decisions}
 To address these limitations, 
we outline the detailed design of our protocol and the rationale behind the design decisions. 

\noindent \textbf{2-layer Design. } 
We specify a 2-layer architecture for our protocol: the core-PKG generates the keys for base stations and the base station creates the signatures. We use a 2-layer approach instead of a hierarchical approach where a core-PKG generates keys for AMFs and the AMFs generate the signing keys for the base stations for several reasons: 
In our approach, the AMF only needs to forward the \textit{keyext\_request} and the response from core-PKG. 
In our scheme, we are aiming to make the scheme both signer and verifier efficient.

\noindent \textbf{Fine-grained Lawful Interception. } 
We design our protocol with fine-grained lawful interception in mind. Algorithm \ref{alg:IBS_lawful} introduces key generation with a sequence number, ensuring that keys generated with a previous sequence number are implicitly invalidated when a new sequence number is used. This mechanism enables seamless key revocation and can be further enhanced by integrating fine-grained access control policies directly within the core-PKG, ensuring efficient and secure key management. 

\noindent \textbf{Minimize Bytes Sent Over-The-Air.} 
To comply with the current protocol and introduce a minimum overhead while satisfying modern security requirements, we use a Schnorr signature scheme. 
Only 111 bytes (see \ref{Sec:counterparts}) are required to send over-the-air to the UE. The communication overhead is 26\% smaller than the previous scheme.  

\noindent \textbf{Choice of Messages to Sign.} \texttt{System information} messages are broadcast periodically by the base stations to allow UEs to initiate a connection to them. \texttt{System Information} messages are divided into a \texttt{Master Information Block (MIB)} and multiple \texttt{System Information Block (SIB)} messages~\cite{3GPP:38.331}. \texttt{MIB} includes the basic parameters required by the UE to acquire the \texttt{SIB1} message. The \texttt{SIB1} message is the most important \texttt{System Information} message and contains the base station selection parameters, scheduling info for the rest of the SIB messages, whether one or more SIB messages are only provided on-demand, and configuration needed by the UE to perform the system information request.  
Since the MIB and SIB1 messages are two messages required for a UE to connect to a base station, our protocol signs the two messages together and provides the signature in the SIB1 message. After the UE receives the SIB1 message, it is able to authenticate the base stations. 

\noindent \textbf{Construction of Identities.} Our protocol requires assigning IDs to the base stations. We utilize the IDs for the dual purpose of uniquely identifying the base stations as well as for communicating the validity period of their signing keys. 
For $U_{BS}$ we use a concatenation of \texttt{NRCell\_ID}~\cite{3GPP:29.571} and an expiry timestamp. \texttt{NRCell\_ID} is a string of size 36 bits and uniquely identifies a base station for a particular mobile network operator. Each expiry timestamp is 32 bits long. Therefore, $U_{BS}$ can be a maximum of 9 bytes. 

\noindent \textbf{Validity period of the keys.}
\label{validity} Instead of using complex key revocation techniques, we assign different validity periods to each generated keypair after which the keys would need to be refreshed. For the core-PKG, we create the key-pair with a 1 year validity period by default as it needs to be installed inside the UE's USIM, and requires a confidentiality and integrity-protected channel to be updated. The core-PKG needs to be physically secured and protected so that its private key is not leaked. 
For the base stations, we generate a key pair valid for only 10 minutes. base stations are located around the world in physically insecure areas. Therefore, it may be easier for the attacker to compromise them. A validity period of only 10 minutes minimizes the period during which an attacker can launch attacks, even if it obtains a base station's private key. These validity periods are recorded in the expiry timestamps in the $U_{BS}$, as well as in the UE's USIM for the core-PKG. These are the default validity periods and can be changed by the network operators when required. Since our key generation is efficient (1000 keys per 5.5 milliseconds), its impact on core-PKG is negligible. 

\begin{figure*}[t]
 \centering
    \includegraphics[width=0.9\linewidth]{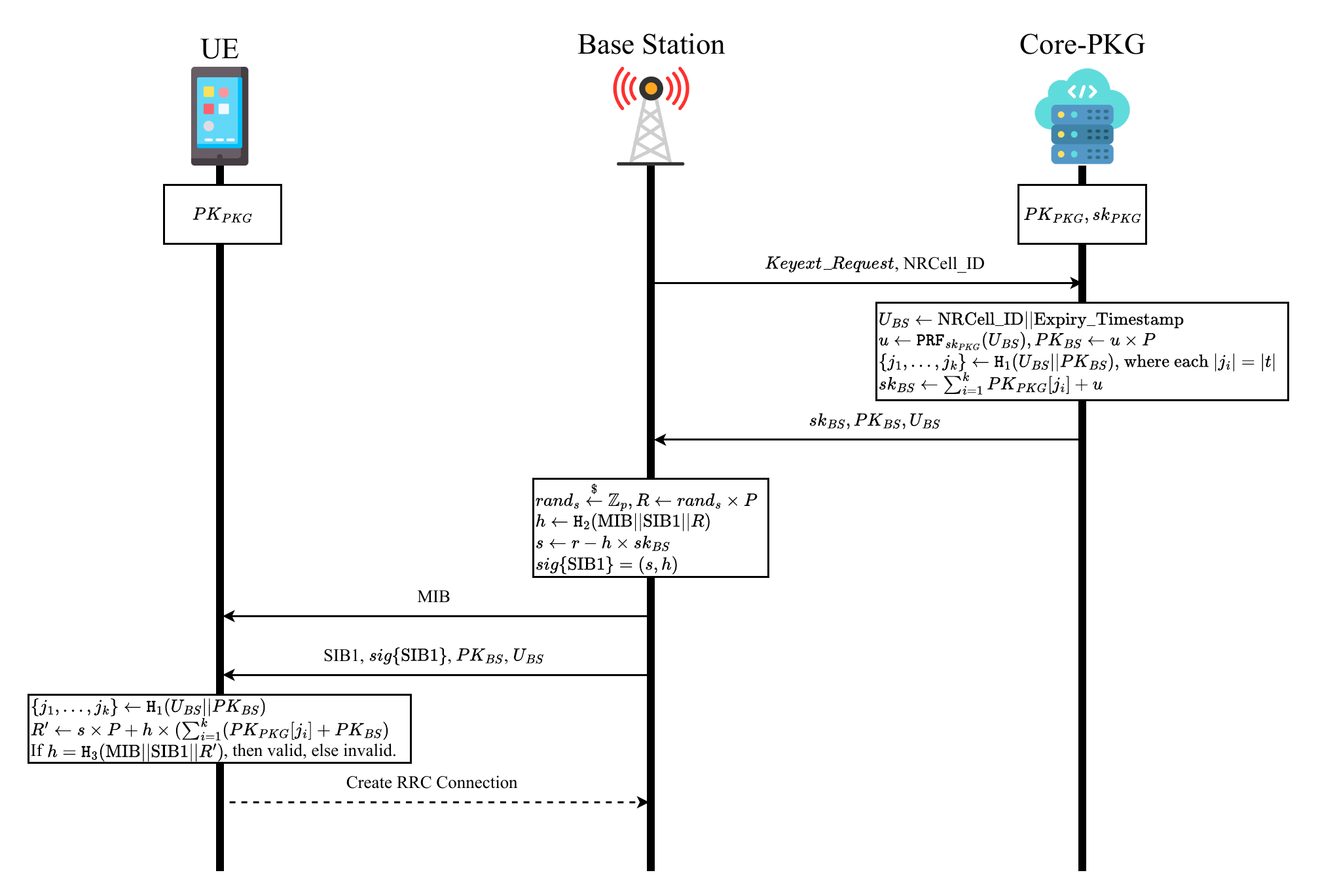}
    \caption{Our protocol for authenticating 5G cellular base stations.}
        \label{protocol}
\end{figure*}

\subsection{Protocol Description}\label{protocolDescription}
We now detail our authentication protocol steps. We abstract some cryptographic details for readability. 
For instance, we do not explicitly mention the mod operation, but all the operations in $E(\mathbb{F}_p)$ are executed in mod~$p$ and operations in $\mathbb{Z}_q$ are in mod~$q$. 
Figure~\ref{protocol} gives a graphical representation of our protocol for a 5G scenario. 

\subsubsection{Initialization phase for the core-PKG}
\noindent The core-PKG generates the public system parameters and its own public-private key pair during the initialization phase. From $sk_{PKG}$, core-PKG generates $t$ sets of public-private key pairs. $PK_{PKG}$ contains all $t$ public keys generated. This phase is executed at the beginning of the 5GC deployment. The default validity period of the core-PKG's keys is 1 year. The public key of the core-PKG along with its expiry date is installed in the USIM of all UEs during initial registration. The core-PKG's public key installed in the USIM has to be replaced, whenever the core-PKG refreshes its keys. This can be done using the confidentiality and integrity-protected channel created between the AMF and the UE after mutual authentication. The core-PKG uses the \texttt{Setup} step from Algorithm~\ref{alg:IBS} to generate its key pair
\{$sk_{PKG},\ PK_{PKG}$\}.

\begin{gather*}
sk_{PKG} \gets \mathbb{Z}_p \\
z_i \gets \PRF_{sk_{PKG}}(i), Z_i \gets z_i \times P \textbf{ for } i\in \{1,\dots,t\} \\
PK_{PKG} \gets \{Z_i\}_{i=1}^t
\end{gather*}

\subsubsection{Key extraction for the base station}
\noindent Base stations send a key extraction request with their \texttt{NRCell\_ID} to the core-PKG through their serving AMFs. The core-PKG concatenates the received \texttt{NRCell\_ID} with an expiry timestamp for the key being generated as $U_{BS}$. From $U_{BS}$, the core-PKG follows the \texttt{Extact} step from Algorithm~\ref{alg:IBS} to generate $sk_{BS}$ and $PK_{BS}$. The base stations have to periodically refresh their key-pair by sending the key-generation request to the core-PKG when nearing the key-pair expiration. 

\begin{gather*}
U_{BS} \gets \text{NRCell\_ID} || \text{Expiry\_Timestamp} \\
u \gets \mathtt{PRF}_{sk_{PKG}}(U_{BS}), PK_{BS} \gets u \times P \\
\{j_1, ..., j_k\} \gets \mathtt{H}_1(U_{BS}||PK_{BS}) \text{, where each } |j_i|=|t| \\
sk_{BS} \gets \sum^k_{i=1}PK_{PKG}[j_i] + u
\end{gather*}

\subsubsection{Signing phase at the base station} 
The base stations sign the MIB and SIB1 message via \texttt{Sign} step of Algorithm~\ref{alg:IBS} and generate the signature $sig_{SIB1}$. They attach the $sig_{SIB1}$, $PK_{BS}$, and $U_{BS}$ along with the SIB1 message broadcast. Before signing, the base station needs to ensure that their own keys have not expired. \begin{gather*}
rand_s \stackrel{\$}{\leftarrow} \mathbb{Z}_p, R \gets rand_s \times P\\
h \gets \mathtt{H}_2(\text{MIB}||\text{SIB1}||R)\\
s \gets r - h \times sk_{BS}
\end{gather*}
where $\langle s,\ h \rangle$ is the signature.

\subsubsection{Verification phase at the UE} 
The UE uses the $U_{BS}$ and the $PK_{BS}$ sent by the base station attached to the SIB1 message to verify $sig_{SIB1}$. The UE first verifies that the key $PK_{BS}$ is not expired by looking at the expiry timestamps embedded in the $U_{BS}$. If the timestamps have not expired, the UE computes the indices to select the corresponding public keys from $PK_{PKG}$ and then verifies the signature $sig_{SIB1}$.
For verification, the UE uses the public keys of core-PKG and the base station: 
\begin{gather*}
\{j_1,...,j_{k}\} \gets \mathtt{H}_1(U_{BS}||PK_{BS})\\
R' \gets s \times P + h \times (\sum^{k}_{i=1}(PK_{PKG}[j_i] + PK_{BS})\\
\text{If } h = \mathtt{H}_3(\text{MIB}||\text{SIB1}||R') \text{, then valid, else invalid.}
\end{gather*}
\noindent \textbf{Authentication failure action.} In case of authentication failure or the absence of authentication capabilities at the base station, the UE does not connect to the base station and keeps searching for other base stations available in the area. If there are no available base stations that can be authenticated, the UE can connect to an unauthenticated base station or keep looking for a base station that can be authenticated. We propose this to be a UE-specific choice, which can be configured depending on the mobile user's security/connectivity needs. If the UE decides to connect to an unauthenticated base station, it keeps checking the \texttt{System Information} messages to find a base station that can be authenticated.

\subsection{Handling Roaming Scenario}
\label{roaming}
Roaming services enable a UE to connect to base stations operated by a different network operator. Since each operator manages its own core-PKG, the UE must first obtain the public key of the roaming operator. The UE's primary operator can sign the roaming operator’s public key and provision it through non-3GPP access networks, such as Wi-Fi.

\subsection{Protection Against Relay Attacks}
Our authentication protocol protects against fake base stations by allowing UEs to authenticate \texttt{System Information} messages. However, it is vulnerable to relay attacks, where an adversary retransmits these messages with a stronger signal, tricking the UE into connecting to a fake base station. Distance-bounding protocols \cite{rasmussen2010realization, tippenhauer2015uwb, durholz2011formal} could prevent these attacks but would require major changes to cellular protocols. An alternative is to time-bound \texttt{System Information} message validity by estimating the time for an adversary to intercept and retransmit messages, though this doesn't account for base station frequency or coverage area. The 5G base stations can vary in configurations and use different frequencies to cover different ranges \cite{3GPP:38.104}, making the use of a fixed transmission time less practical. 

To protect against relay attacks in all scenarios, we propose time-bounding the \texttt{System Information} message signatures based on the base station’s configuration. The validity period, denoted by $\mathsf{\bigtriangleup t}$, is calculated from the configuration-specific transmission delay ($\mathsf{\bigtriangleup t_{conf}}$) and cryptographic signature delay ($\mathsf{\bigtriangleup t_{sign}}$). $\mathsf{\bigtriangleup t_{conf}}$ is configured by the operators and can be derived from a lookup table stored securely in the base station's memory. $\mathsf{\bigtriangleup t_{sign}}$ varies with the cryptographic scheme used. The base station signs the message with a timestamp $\mathsf{T_{sign}}$ and the validity period. The UE checks the validity by verifying if $\mathsf{T_{current} < T_{sign} + \bigtriangleup t}$ when receiving the message.

\section{Evaluation}
\label{Evaluation}
Here we evaluate \scheme{} with other state-of-the-art authentication schemes and report the experiment results. 
Furthermore, to demonstrate the practicality of our scheme, we incorporate our scheme into the open-source 5G implementation. The experiment results show our scheme has a negligible performance impact on the base stations and the UEs. 
The implementations used in the evaluation are available at \cite{E2IBS-github}. 
\subsection{Testbed Setup}

\noindent \textbf{Hardware and software components.}
We evaluate the efficacy of our scheme on a desktop machine with Intel Core i9-14900k and 64 GB DDR5 RAM, running Ubuntu 22.04. We use the PBC-library~\cite{pbc} for implementing pairing-based schemes and the FourQlib~\cite{fourqlib} for FourQ elliptic curves.

For the 5G testbed, we implemented our scheme on top of the open-source 5G stack implementation, OpenAirInterface~\cite{openAirInterface}. OpenAirInterface (OAI) has 5G Core, base station, and UE implementations. We modified the components to support our scheme. Since significant changes are required for a normal UE to authenticate the base station, we can only use our modified OAI-UE to conduct the evaluation. We plan to open-source our modified cellular stack to help further research in this field. We use two \textit{USRP B210}~\cite{USRP_B210} software-defined-radio (SDR) boards connected to the desktop machine mentioned above. One \textit{USRP B210} serves as the 5G base station and another serves as the 5G UE.

\subsection{Evaluation Results}
\noindent \textbf{Signature Schemes.}
\label{Sec:counterparts}
We consider 6 signature schemes for qualitative and quantitative comparisons with our scheme. Following the recommendations of 3GPP~\cite{3GPP:33.809}, we evaluate identity-based signature schemes BLS~\cite{boneh2001short}
We believe that BLS has been mentioned as an identity-based scheme by mistake. While one can devise an identity-based signature from BLS, such schemes are deemed to be very expensive (see Section 3.3 in~\cite{DBLP:series/ciss/KiltzN09}).  
3GPP also recommends the ECCSI scheme from RFC-6507~\cite{rfc6507}, which is based on the improved schemes proposed by Arazi et al.~\cite{Arazi-SelfCertified,araz2006load}. However, to the best of our knowledge, there is no provable security argument for these schemes, and the earlier versions of such schemes are insecure~\cite{Arazi-Attacked}. We thus omit this scheme from our comparisons. We include ECDSA~\cite{johnson2001elliptic} in our comparisons because of its wide adoption. We also include SCRA-BGLS~\cite{yavuz2017real}, since it is a recent proposal for base station authentication in 5G~\cite{hussain2019insecure}. 
In addition, we evaluate two fast authentication schemes based on the FourQlib, SchnorrQ \cite{SchnorrQ} and ARIS \cite{ARIS}. We also evaluate our preliminary scheme, \texttt{Schnorr-HIBS} \cite{singla2021look}. 
We have implemented the pairing-based schemes with the PBC-library~\cite{pbc} curve d224 which provides 112 bits of symmetric key security. These schemes are even slower for the 128-bit security level. All other schemes provide 128-bit symmetric key security according to NIST recommendations~\cite{keylength}. 
Table~\ref{comparison_schemes} summarizes our comparisons.

\begin{table*}[t]
\centering 
\renewcommand{\arraystretch}{}
			\begin{tabular}{|c||c|c|c|c|c|c|c|c|c|c|}
\hline
\multirow{2}{*}{\textbf{Scheme}} & \multicolumn{2}{c|}{\textbf{Sign}} & \multicolumn{2}{c|}{\textbf{Verify}} & \multirow{2}{*}{\begin{tabular}[c]{@{}c@{}}\textbf{Signature}\\ (B)\end{tabular}} & \multirow{2}{*}{\begin{tabular}[c]{@{}c@{}}\textbf{PK}\\ (B)\end{tabular}} & \multirow{2}{*}{\begin{tabular}[c]{@{}c@{}}\textbf{Crypto E2E }\\ \textbf{Delay} ($\mu$s)\end{tabular}} & \multirow{2}{*}{\textbf{System}} & \multirow{2}{*}{\begin{tabular}[c]{@{}c@{}}\textbf{Scheme} \\ \textbf{Type}\end{tabular}} & \multirow{2}{*}{\begin{tabular}[c]{@{}c@{}}\textbf{Lawful}\\ \textbf{Interception} \end{tabular}} \\ \cline{2-5}
 & \multicolumn{1}{l|}{$\mu$s} & \multicolumn{1}{l|}{Cycles} & \multicolumn{1}{l|}{$\mu$s} & \multicolumn{1}{l|}{Cycles} &  &  &  &  & & \\ \hline

ECDSA-256~\cite{johnson2001elliptic} & 521.14 & 1.662 & 172.78 & 0.550 & 64 & 32 & 693.92 & Flat & CB & No \\ \hline
BLS~\cite{boneh2001short}$\dagger$ & 1658.78 & 5.287 & 4957.02 & 15.799 & 48 & 96 & 6615.80 & Flat & CB & No \\ \hline
SCRA-BGLS~\cite{yavuz2017real} & 79.75 & 0.254 & 50265.27 & 160.208 & 29 & 85 & 50345.02 & Flat & CB & No \\ \hline
SchnorrQ~\cite{SchnorrQ} & 6.29 & 0.020 & 10.36 & 0.033 & 64 & 32 & 16.65 & Flat & CB & No \\ \hline
ARIS~\cite{ARIS} & 8.02 & 0.026 & 6.99 & 0.022 & 64 & t*32 & 15.01 & Flat & CB & No \\ \hline
\texttt{Schnorr-HIBS}~\cite{singla2021look} & 5.93 & 0.019 & 30.23 & 0.096 & 64 & 32 & 36.16 & Hierarchical & IDB & No \\ \hline
\hline
\scheme{} & 6.04 & 0.019 & 15.45 & 0.049 & 64 & 32 & 21.48 & Hierarchical & IDB & Yes \\ \hline
\end{tabular}

\begin{tablenotes}[list=off,flushleft]
\footnotesize{
\item All sizes are in bytes, and all computations are in microseconds. We also represent the number of CPU cycles for computation in millions.
\textbf{Signature} and \textbf{PK}  represent the signature size and public size, respectively. \textbf{Scheme Type} indicates whether the scheme is certificate-based (CB) or identity-based (IDB). \textbf{Crypto E2E Delay} for certificate-based schemes includes the verification of the sender's public key authenticity via certificates provided by a CA. For ARIS, we use t=1024, and for \scheme{}, we use t=1024 and k=18 in our evaluation. To be favorable to certificate-based schemes, we only consider a certificate chain of size 1. We implemented the pairing-based schemes with the PBC-library~\cite{pbc} curve d224, providing 112-bit security. These schemes are even slower for the 128-bit security level. All other schemes provide 128-bit security according to NIST recommendations~\cite{keylength}. 

$\dagger$ BLS is listed as an identity-based scheme by 3GPP~\cite{3GPP:33.809} but it is certificate-based (see Section \ref{Sec:counterparts}). 
}
\end{tablenotes}

\caption{Quantitative and qualitative comparison of the candidate signature schemes for authenticating cellular base stations.}	\label{comparison_schemes} 
\end{table*}

\noindent \textbf{Quantitative comparison.}
We evaluate signing and verification costs, end-to-end cryptographic delay, and total communication overhead of the compared schemes. 

\noindent \textit{\underline{Signing cost:}} Signature generation in \scheme{} only takes 6.04 $\mu$s which is 13x faster than SCRA-BGLS and 86x faster than ECDSA. \scheme{} is only slightly slower than Schnorr-HIBS. Keeping the signing time low is critical for base stations as new SIB1 messages are transmitted with a periodicity of 160 ms. Within this 160 ms, repetitive transmission occurs every 20 ms typically. A low signing overhead should also incentivize the network operators to configure the base stations to sign all broadcast messages providing full integrity protection.

\noindent \textit{\underline{Verification cost:}} Our scheme also has the second fastest verification phase taking just 15.45 $\mu$s. Our scheme outperforms the ECDSA by 11x and \texttt{Schnorr-HIBS} by 2x. Only ARIS and SchnorrQ are faster than \scheme{}. However, in the 5G setup, we need at least a 2-level certificate chain. The UE needs to verify both the signature of gNB's public key and the signature of SIB1, which makes ARIS slower than our proposed scheme. 
The verification phase is performed by UEs, which are usually resource-constrained devices, so keeping a low verification overhead is critical for saving energy, thus extending the battery life. Moreover, as 5G cellular networks are expected to deploy small base stations with much smaller coverage areas, UEs would be forced to switch between base stations at a much faster rate than with conventional base stations. 
The presence of (large numbers of) base stations with small coverage areas would require the UEs to execute the verification phase for the SIB1 message of base stations at a much higher rate, so keeping the verification overhead low is critical for 5G cellular networks. 

\noindent \textit{\underline{End-to-end cryptographic delay:}} We calculate the total cryptographic overhead of the signature schemes instantiated for cellular base station authentication, consisting of the signing and verification cost. The end-to-end delay for certificate-based schemes also includes the verification of the sender's public key authenticity via certificates provided by a certification authority. To be favorable to these schemes, we only consider a certificate chain of size 1. Our scheme provides the lowest end-to-end delay with only 21.48 $\mu$s, making it a very good candidate for authentication, especially in the presence of base stations with small coverage areas. 

\noindent \textit{\underline{Communication overhead:}} SCRA-BGLS has the smallest signature of 29 bytes 
, whereas our scheme \scheme{} and ECDSA-224 have a signature of size 64 bytes. Our scheme has the smallest public key size of 32 bytes, equivalent to ECDSA-256, SchnorrQ, and \texttt{Schnorr-HIBS}. Radio wireless channels are a limited resource, making any cryptographic scheme that adds a lot of communication overhead unlikely to be adopted by network operators. Our scheme provides the smallest communication overhead for authentication. It requires attaching the \scheme{} signature (64 bytes), base station's ID (9 bytes), base station's public key (32 bytes) along with the signing timestamp (4 bytes), and the signature validity period $\mathsf{\bigtriangleup t}$ (2 bytes) for relay attack prevention. This is a total overhead of 111 bytes, which is much lower than the communication overhead for ECDSA-based PKI (277 bytes) and SCRA-BGLS-based PKI (220 bytes). Our scheme can fit in the spare space of SIB1 perfectly. 

\noindent \textbf{Qualitative comparison.}
We compare the schemes based on the type of authentication system and the type of scheme.

\noindent \textit{\underline{System:}} A multi-layer construction is crucial for an identity-based signature scheme suitable for the cellular network architecture (see discussion in Section~\ref{design_decisions}). Both our scheme and \texttt{Schnorr-HIBS} are identity-based and pairing-based. \texttt{Schnorr-HIBS} supports a hierarchal construction while \scheme{} has a fixed two-layer design. On the other hand, signature-based schemes use certificate chains to support multiple signer levels. 

\noindent \textit{\underline{Type of scheme:}} ECDSA, BLS, SCRA-BGLS, SchnorrQ, and ARIS are certificate-based schemes and require a costly public key infrastructure (PKI). On the other hand, \texttt{Schnorr-HIBS} and \scheme{} are identity-based schemes and are more lightweight than certificate-based solutions as they do not require sending huge certificates. However, since the PKI is not available, the user cannot detect a compromised PKG. 

\subsection{Evaluation on 5G Platform}
\noindent \textbf{Implementation Details:}
To measure the performance impact of our scheme, we integrated our scheme into OpenAirInterface. Since the SIB1 message is periodic, we created a separate thread in the base station to pre-generate the next signature. When the current SIB1 message is sent, we calculate the next timestamp and signature for the next SIB message. We also modified the UE part to verify the signature. If the signature is not valid, the UE should abort the registration process. 

\noindent \textbf{Results and Analyzes:}
We measured the SIB1 generation time on the base station, the SIB1 processing time on the UE, and the time difference between the signed timestamp and the time UE verifies the signature. The results are shown in Table \ref{tab:OAI_eval}. During the experiment, the local time of gNB and UE must be synchronized to minimize the errors. To make our result accurate, we run two \textit{USRP B210}s on the same desktop machine. The result shows that our scheme has negligible performance impacts on SIB1 generation and the time difference. The result shows our scheme can generate the SIB with signature in a short time, without adding a heavy load of the base stations. We also observed an 84\% increase in the SIB1 processing time on the UE. However, since the UE does not need to verify the base station very often, the ~5ms time increase can hardly be perceived by the user. 

\begin{table}[t]
    \centering
    \begin{tabular}{|c|c|c|c|}
    \hline
                    &  SIB1 Gen & SIB1 Proc & Time $\Delta$\\
    \hline
    w/o \scheme{}   & 23.99 & 6178.12 & 3.28 \\
    \hline
    w/ \scheme{}    & 24.72 & 11337.08 & 3.40 \\
    \hline
    \end{tabular}
    \caption{Execution time ($\mu$s) in 5G platform}
    \label{tab:OAI_eval}
\end{table}

\section{Related Work}
\label{Related Work}
To address fake base station attacks, some defense techniques rely on improving fake base station detection and then blacklisting them. One such solution is to use measurement reports sent by UEs to detect inconsistencies between the tampered information being broadcast by the fake base stations and the legitimate base station deployment information like the base station identifier or operation frequencies of the base stations~\cite{3GPP:33.501}. Other techniques rely on machine-learning solutions~\cite{jin2019rogue, van2015detecting, engelstad2016strengthening} or  gathering surrounding network signal statistics from the UEs, legitimate base stations or other newly deployed hardware~\cite{dabrowski2014imsi, steig2016network, alrashede2019imsi, dabrowski2016messenger, li2017fbs}. Such techniques can be easily bypassed and have been shown to enable attacks resulting in degradation of network performance and blacklisting of legitimate base stations~\cite{shaik2018impact}. 

Some other approaches \cite{fan2019rehand, alnashwan2023privacy, 10.1145/3658644.3690331} focus on 5G handover to protect user privacy during the change of base stations. Although they also provide base station authentication in the scheme, due to the extra security properties, the performance can be much worse than our signature scheme. Moreover, these approaches require a significant change in the 5G protocol. That makes the adaptation more difficult.

Other IBS schemes \cite{meshram2021iboost, zhang2023identity, au2013realizing} focused on different applications with different security goals. Due to the specific limitations of the 5G system (e.g., message size limit and performance requirements), it is difficult to apply these schemes.

\section{Conclusion}
\label{Conclusion and Future Work}
We have proposed an efficient authentication protocol for 5G networks based on an identity-based signature scheme. Our protocol achieves at least 11 times speedup over the 3GPP proposals and around 2 times speedup over the previous scheme in terms of end-to-end cryptographic delay for authenticating base stations.
Our protocol also achieves the smallest communication overhead against all compared schemes. In addition, We open-source our implementation \cite{E2IBS-github} to benefit future research. 
We hope 3GPP can discuss more about the base station authentication and our solution can be integrated into future specification releases.

\section*{Acknowledgements}
\label{Acknowledgements}
This work was supported by the National Science Foundation (NSF) under grants 2145631, 2215017, 2226447, and 2350213, along with the Defense Advanced Research Projects Agency (DARPA) under contract number D22AP00148. Additional support was provided by the NSF Convergence Accelerator Track G: Securely Operating Through 5G Infrastructure Program, funded by the NSF and the Office of the Under Secretary of Defense—Research and Engineering (ITE 2326898). The research of Rouzbeh Behnia was supported by the USF Office of Research (Sarasota-Manatee campus) through the Interdisciplinary Research Grant Program.

\appendices

\ifCLASSOPTIONcaptionsoff
  \newpage
\fi

\bibliographystyle{IEEEtran}
\bibliography{bibtext}

\begin{IEEEbiography}
[{\includegraphics[width=1in,height=1.25in,clip,keepaspectratio]{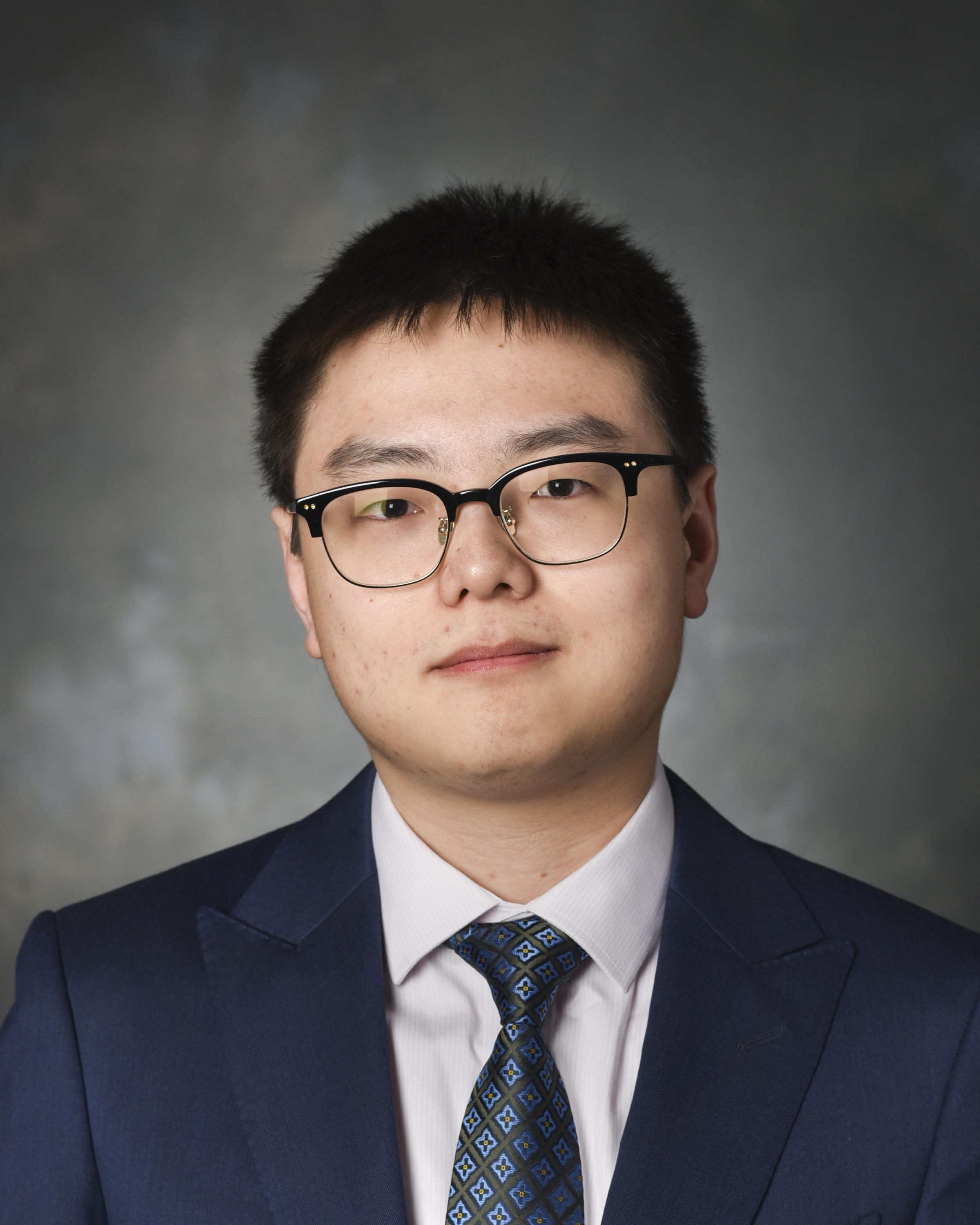}}]
{Yilu Dong}
is a Ph.D. student at Penn State University. He also 
received his M.S. and B.S. degrees from Penn State University. He is interested in communication protocols, software testing, and applied cryptography. His work focuses on the security of 5G systems. Specifically, improving the security and privacy of both UE and core network implementations. 
\end{IEEEbiography}

\begin{IEEEbiography} 
[{\includegraphics[width=1in,height=1.25in,clip,keepaspectratio]{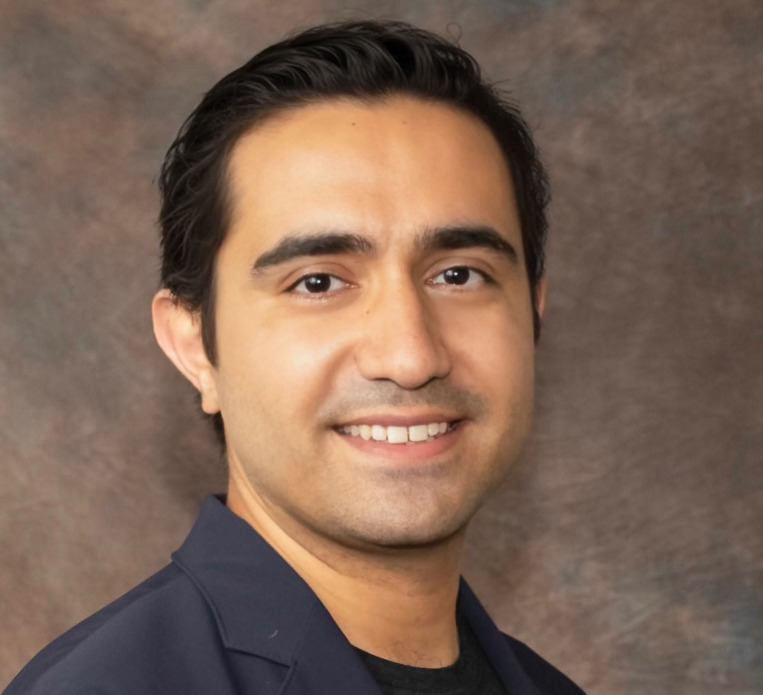}}]
{Rouzbeh Behnia}
is an assistant professor at the School of Information Systems and Management (SISM) at the University of South Florida. He received his Ph.D. in Computer Science from the University of South Florida.
His research focuses on different aspects of cybersecurity and applied cryptography. He is particularly interested in addressing privacy challenges in AI systems, developing post-quantum cryptographic solutions, and enhancing authentication protocols to ensure computation and communication integrity.
\end{IEEEbiography}

\begin{IEEEbiography}[{\includegraphics[width=1in,height=1.25in,clip,keepaspectratio]{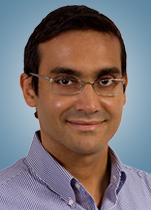}}]{Attila A. Yavuz}
is an Associate Professor in the Department of Computer Science and Engineering and the Director of the Applied Cryptography Research Laboratory at the University of South Florida (USF). He was an Assistant Professor in the School of Electrical Engineering and Computer Science, at Oregon State University (2014-2018) and in the Department of Computer Science and Engineering, USF (2018-June 2021). He was a member of the security and privacy research group at the Robert Bosch Research and Technology Center North America (2011-2014). He received his Ph.D. degree in Computer Science from North Carolina State University (2011). He received his MS degree in Computer Science from Bogazici University (2006) in Istanbul, Turkey. He is broadly interested in the design, analysis, and application of cryptographic tools and protocols to enhance the security of computer systems. Attila Altay Yavuz is a recipient of the NSF CAREER Award, Cisco Research Award (thrice - 2019,2020,2022), and unrestricted research gifts from Robert Bosch (five times). He has authored over 100 products including research articles in top conferences, journals, and patents, some of which world-wide impact with actual deployments.
\end{IEEEbiography}

\begin{IEEEbiography}
[{\includegraphics[width=1in,height=1.25in,clip,keepaspectratio]{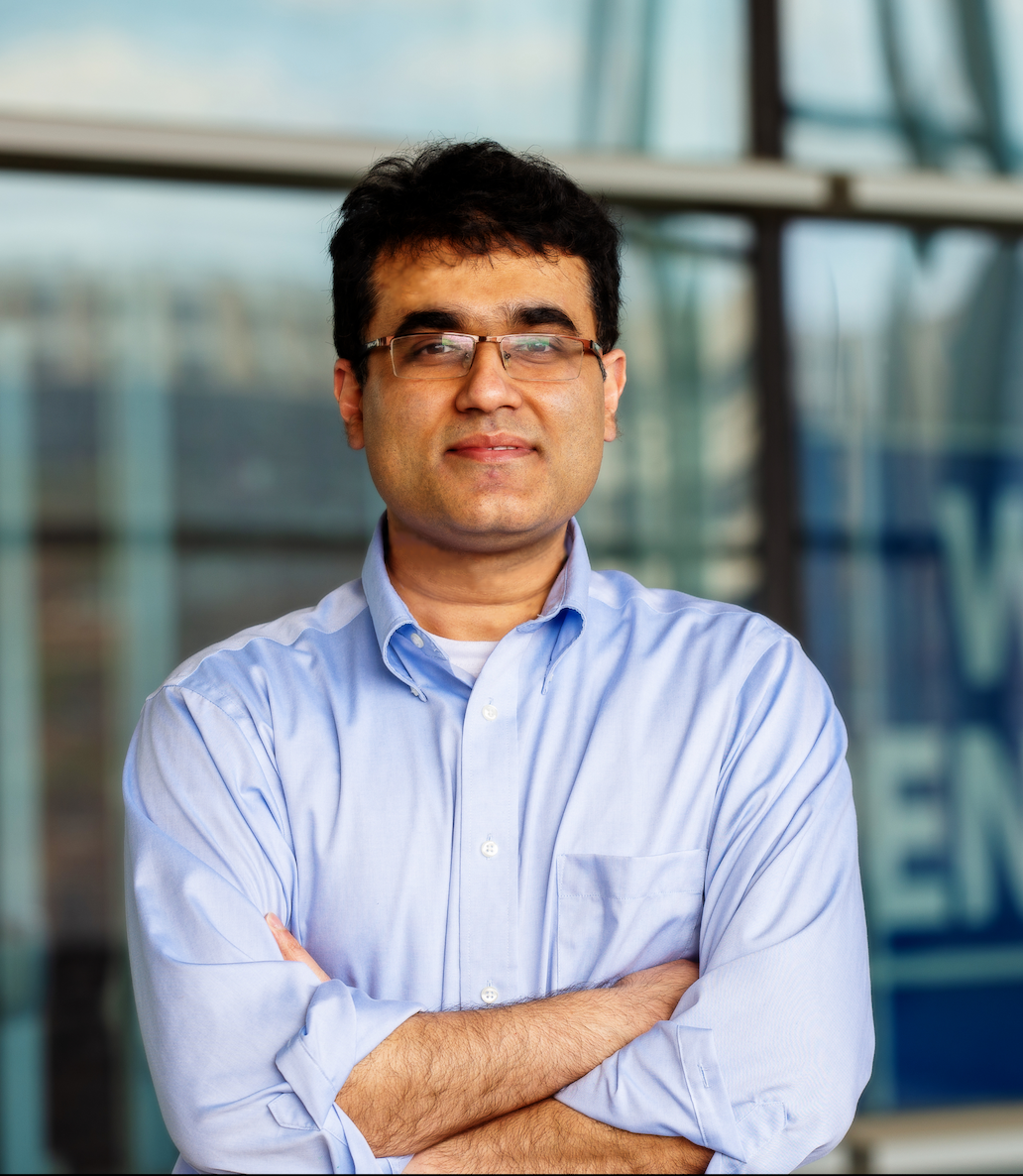}}]
{Syed Rafiul Hussain} (Member, IEEE) is an assistant professor in the Department of Computer Science and Engineering, Pennsylvania State University, State College, PA 16802 USA. His research interests include systems and network security, formal methods, program analysis and applied cryptography. He received a Ph.D. in computer science from Purdue University. He is a Member of IEEE and the Association for Computing Machinery.
\end{IEEEbiography}

\end{document}